\documentclass{article}


\usepackage[preprint]{neurips_2025}
\usepackage[ruled,vlined]{algorithm2e}
\SetKw{Continue}{continue}
\SetKw{KwParam}{Parameters:}
\usepackage{graphicx}



\usepackage{graphicx}
\usepackage{subcaption}
\usepackage[utf8]{inputenc} 
\usepackage[T1]{fontenc}    
\usepackage{hyperref}       
\usepackage{url}            
\usepackage{booktabs}       
\usepackage{amsfonts}       
\usepackage{nicefrac}       
\usepackage{microtype}      
\usepackage{xcolor}         
\usepackage{natbib}
\usepackage{bm}
\usepackage{bbm}
\usepackage{amsmath}
\allowdisplaybreaks[1]
\usepackage{amsthm}
\newtheorem{definition}{Definition}
\newtheorem{theorem}{Theorem}

\newtheorem{lemma}{Lemma}

\newcommand{\bX}{\bm{X}}
\newcommand{\blam}{\bm{\lambda}}

\newcommand{\bL}{\bm{L}}

\title{Tail-Optimized Caching for LLM Inference}

%

\author{%
  Wenxin Zhang
  \\
  Columbia Business School\\
  \texttt{wz2574@columbia.edu} \\
  \And
  Yueying Li\\
  Cornell University, Department of Computer Science \\
  \texttt{yl3469@cornell.edu} \\
  \texttt{}\\
  \And 
  Ciamac C. Moallemi \\
  Columbia Business School\\
  \texttt{ciamac@gsb.columbia.edu} \\
  \And
  Tianyi Peng \\
  Columbia Business School\\
  \texttt{tp2845@columbia.edu} \\
}

\begin{document}

\maketitle
\begin{abstract}
Prompt caching is critical for reducing latency and cost in LLM inference---OpenAI and Anthropic report up to 50–90\% cost savings through prompt reuse. Despite its widespread success, little is known about what constitutes an optimal prompt caching policy, particularly when optimizing tail latency—a metric of central importance to practitioners. The widely used Least Recently Used (LRU) policy can perform arbitrarily poor on this metric, as it is oblivious to the heterogeneity of conversation lengths. To address this gap, we propose Tail-Optimized LRU, a simple two-line modification that reallocates KV cache capacity to prioritize high-latency conversations by evicting cache entries that are unlikely to affect future turns. Though the implementation is simple, we prove its optimality under a natural stochastic model of conversation dynamics, providing the first theoretical justification for LRU in this setting---a result that may be of independent interest to the caching community. 
Experimentally,  on real conversation data WildChat~\citep{zhao2024wildchat}, Tail-Optimized LRU achieves up to 27.5\% reduction in P90 tail Time to First Token latency and 23.9\% in P95 tail latency compared to LRU, along with up to 38.9\% decrease in SLO violations of 200ms. 
We believe this provides a practical and theoretically grounded option for practitioners seeking to optimize tail latency in real-world LLM deployments.
\end{abstract}

\section{Introduction}
\noindent\textbf{Prompt Caching is Essential.}
AI capabilities have exploded in recent years, and so has the demand. By December 2024, ChatGPT handled \textit{1 billion} user messages every day with \textit{300 million} weekly active users~\citep{OpenAI2024Stats}. 
To efficiently use scarce and costly GPU resources, prompt caching, or prefix caching, was proposed~\citep{gim2024prompt}: it caches the KV cache of existing queries, allowing a new query to skip some computation by reusing the KV cache it shares with existing queries~\citep{vLLM2025PrefixCaching}. 
Prompt caching can reduce prefill computation thus Time to First Token (TTFT).
This technique has been adopted by OpenAI and Anthropic, both reporting a significant amount (50-90\%) of latency and cost reductions~\citep{OpenAI2024PromptCaching,Anthropic2024PromptCaching}.
Despite the practical impact of prompt caching, little is known about how much existing caching policies—such as the Least Recently Used (LRU) policy—can be improved upon with respect to \textit{key metrics} in LLM inference systems, which is the main motivation of this work.  


\noindent\textbf{Challenges of Optimizing Tail Latency.}
One of such key metrics is \textit{tail latency}. In real‑time user‑facing applications, companies care about high-percentile response time, e.g., 95\% of requests complete within 200 ms. 
In LLM conversation-based applications, users arrive to request services through an alternating sequence of prompts and responses that we call \emph{turns}. Each prompt, along with all previous conversation history, is treated as a job \emph{request}.
When the cache is full, the server must decide \emph{which KV cache blocks to evict}, under four layers of uncertainty: 1) when new conversations are arriving; 2) the number of future turns of existing conversations; 3) the size of future user prompt and model response; and 4) the arrival order of turns from concurrent conversations competing for cache space. 
Here, KV cache blocks are the atomic cacheable units of tokens, e.g., a single block may consist of 128 tokens~\citep{OpenAI2024PromptCaching}.
These intertwined dynamics make tail latency optimization in LLM inference uniquely challenging.


\noindent\textbf{Existing Approaches.}
Classic caching/paging policies 
mostly focus on maximizing \textit{cache hit rate}, not \textit{tail latency}. Among these, the LRU policy is perhaps the most representative, evicting the cache item that was accessed least recently. LRU has been widely adopted in LLM inference systems, including vLLM~\citep{kwon2023efficient}, SGLang~\citep{zheng2024sglang}, and Mooncake~\citep{qin2025mooncake}. However, LRU does not account for the fact that different blocks within a request may have varying effects on tail latency, thus leaving room for further optimization. See Figure~\ref{fig:LRU-example} for an illustrative example.

\begin{figure}
    \centering
    \includegraphics[width=0.8\linewidth]{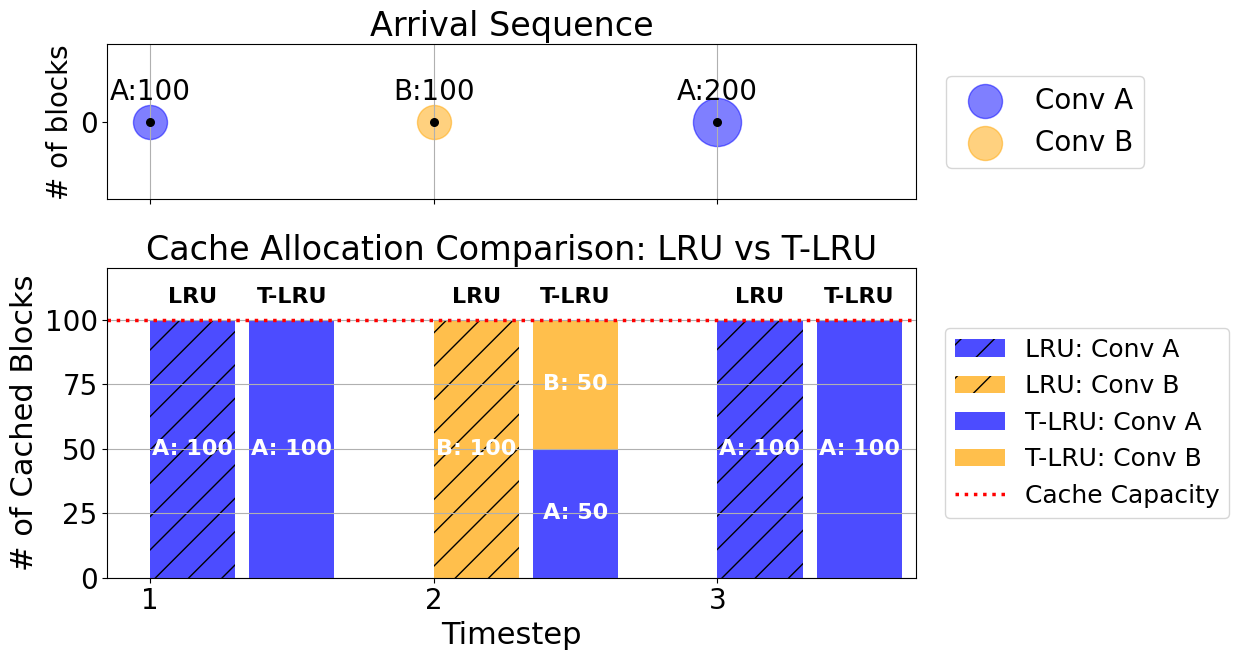}
  \caption{\textbf{Failure of LRU on Tail Latency.} 
Consider an example with two conversations (A and B) and a total of three turns (A has two turns; B has one). The top panel shows the total job size (conversation history plus user prompt) at each step. For simplicity, we assume the response length is zero and that each user prompt consists of 100 blocks. The bottom panel shows the number of cached blocks updated after each turn. The cache capacity is 100 blocks. We measure tail latency as the maximum processing time among the three requests (i.e., the $66.7\%$-th percentile). Under LRU, all of conversation A's cache blocks are evicted immediately after step 2, resulting in a maximum of 200 uncached blocks at step 3. Therefore, the tail latency under LRU corresponds to the time required to process 200 blocks. In contrast, if at step 2 we partially evict A's cache so that both A and B retain 50 cached tokens each, then regardless of whether the third request comes from A or B, the maximum number of uncached blocks is reduced to 150—an improvement of 25\%. This improvement is what our proposed policy, T-LRU, is designed to achieve.}
    \label{fig:LRU-example}
\end{figure}

\noindent\textbf{Our Approach: Tail-Optimized LRU (T-LRU).} 
In this work, we focus on optimizing the tail latency of TTFT (Time to First Token) through improved cache eviction policies. To this end, we introduce the metric \textit{Tail Excess Latency} (TEL):
\begin{equation}
    \text{TEL} = \sum_{i} \max\{\text{TTFT}(i) - \xi_{s},\ 0\},
\end{equation}
where $\text{TTFT}(i)$ denotes the Time to First Token for the $i$-th request, and $\xi_{s} \geq 0$ is a user-specific latency threshold (e.g., $\xi_{s} = 200$\,ms). TEL captures the total TTFT exceeding the threshold, serving as a practical proxy for tail latency while treating each request independently. See Section~\ref{sec: background} for further discussion of this objective.

Empirically, TTFT is known to be approximately linear in the number of uncached tokens\footnote{This approximation holds well when the uncached length is not excessively long and the prefill of the request is not batched with others. We use this approximation for analytical purposes, while the actual TTFT is measured in our experiments.}, i.e.,
\begin{equation}\label{eq:linear-TTFT}
    \text{TTFT}(i) \approx \alpha \cdot b(i),
\end{equation}
where $\alpha$ is a constant determined by the model and GPU configuration, and $b(i)$ is the number of uncached blocks for request $i$ (see~\citep{horton2024kv} and Figure~\ref{fig:linear-uncached-token}). We use this approximation to motivate our policy design and simplify the analysis. By defining $\xi = \xi_{s} / \alpha$, TEL can be equivalently expressed as
\begin{equation}
    \text{TEL} \approx \alpha \sum_{i} \max\{b(i) - \xi,\ 0\}.
\end{equation}
Our intuition is the following: for a conversation turn with conversation history length $L$ and whose next prompt is expected to add $Q$ blocks, caching more than $L + Q-\xi$ blocks for this turn cannot improve TEL as the number of uncached blocks is already lower than $\xi$. Any blocks beyond this TEL-safe budget can therefore be evicted ``for free". See Figure \ref{fig:proactive_trimming} for illustration of this idea. 
Motivated by this observation, we propose Tail-Optimized LRU (T-LRU), a simple modification of LRU. Upon cache overflow, T-LRU works in two phases: 
\begin{enumerate}
    \item TEL-safe trimming: first evict blocks from conversations whose cache size exceeds the TEL-safe budget $L+Q-\xi$;
    \item LRU-as-usual: If space is still needed, evict blocks using LRU.
\end{enumerate} 
In Figure \ref{fig:LRU-example}, the performance of T-LRU with $\xi=150$ is shown. In practice, the next-prompt length is unknown; T-LRU can use a constant $\hat{Q}$ such as the empirical average for estimating the length of the next-prompt.\footnote{Furthermore, $Q - \xi$ can be combined and interpreted as one tunable parameter that determines how many blocks at the end of the conversation history are considered safe to evict.} Implementation requires only one extra bookkeeping: mark TEL-safe blocks as ``infinitely old,'' after which any existing LRU engine can evict them as usual.

\begin{figure}[!htbp]
  \centering
    \includegraphics[width=0.6\linewidth]{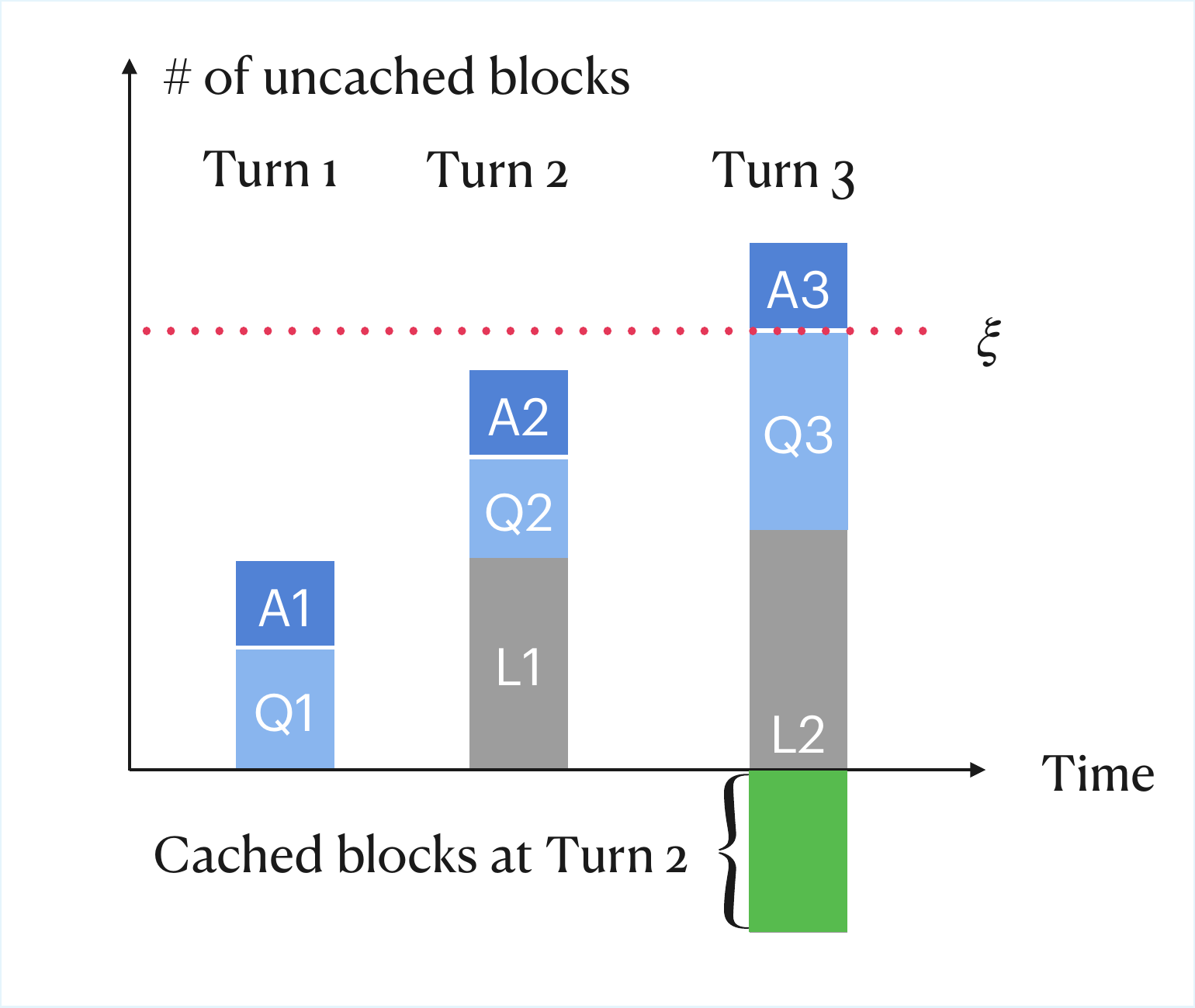}
  \caption{A three-turn conversation $\{(Q_1,A_1),(Q_2,A_2),(Q_3,A_3)\}$ is shown, $L_i$ denotes conversation history (i.e., $L_1=Q_1+A_1, L_2=L_1+Q_2 + A_2$). Bars indicate the number of uncached blocks each turn incurs; the red dashed line marks the latency threshold $\xi$. Turn 1: projected load for next request is $L_1 + Q_2 <\xi$; no caching is needed. Turn 2: now $L_2 + Q_3 >\xi$, at least $L_2 + Q_3 -\xi$ blocks must be cached (green), but caching more won't improve TEL further.
  }
    \label{fig:proactive_trimming}
\end{figure}

\textbf{Theoretical Optimality.} On the theoretical front, we establish the following results:
\begin{enumerate}
    \item Belady’s algorithm~\citep{belady1966study}, which evicts the block whose next use is furthest in the future, is a clairvoyant caching strategy known to achieve the \textit{hindsight-optimal hit rate}. We show that combining \emph{TEL-safe trimming} with Belady’s algorithm yields an policy being \textit{hindsight optimal for TEL}  (Theorem~\ref{thm: modified baledy is optimal}). This result justifies the use of T-LRU, as LRU is a widely used heuristic for Belady’s algorithm.
    \item In the caching literature, the settings under which LRU is \textit{online optimal} 
    are not well understood. To address this, we introduce a novel stochastic model for multi-turn conversations that captures both uncertainty and temporal locality in LLM workloads. Within this framework, we prove that a generalized T-LRU is optimal (Theorem~\ref{thm: optimality of tail LRU rand}), incorporating the optimality of classical LRU (for average latency) and T-LRU as special cases. \textit{To the best of our knowledge, this is the first result demonstrating the optimality of LRU under a natural, conversation-driven arrival model, and may be of broad interest to the caching community.}
\end{enumerate}

\textbf{Strong Practical Performance.} Finally, we evaluate the performance of T-LRU and LRU on real multi-turn chat traces from ShareGPT~\citep{sharegpt} and WildChat~\citep{zhao2024wildchat}. Our policy achieves up to a 23.9\% reduction in P95 tail latency compared to LRU, and up to a 38.9\% decrease in SLO violations relative to the strongest baseline. We also provide insights into selecting the latency threshold $\xi$ (Section~\ref{sec: experiment}). These results highlight the practical advantages of T-LRU.

The rest of the paper is organized as follows. We discuss related work in Section~\ref{sec: related work}. In Section~\ref{sec: background}, we describe the problem setting and present the hindsight-optimal policy for TEL. Section~\ref{sec:algorithm} introduces the T-LRU policy and demonstrates that it is never worse than LRU for optimizing TEL. In Section~\ref{sec: stochastic model}, we prove the optimality of T-LRU for a class of stochastic models that capture conversational arrivals. Section~\ref{sec: experiment} presents the experimental results. Finally, Section \ref{sec:conclusion} discuss future directions.

\section{Related Work}\label{sec: related work}

\noindent\textbf{Classic Caching and Paging.}
The caching problem has been extensively studied from a competitive analysis perspective, with binary cache hit/miss metrics. Foundational results include the hindsight-optimal Belady's algorithm~\citep{belady1966study}, optimal deterministic online policies~\citep{sleator1985amortized}, and randomized algorithms~\citep{fiat1991competitive}. Beyond worst-case competitive ratios, many works model structured request processes to capture temporal locality observed in practice, including access graph models~\citep{borodin1995competitive,chrobak1999lru}, independent reference models~\citep{coffman1973operating,king1972analysis,dan1990approximate,che2002hierarchical}, Shot Noise models~\citep{leonardi2015least}, LRU Stack and Working Set models~\citep{jain1990characteristics}, and Inter-Reference Gap models~\citep{phalke1995inter}. For broader overviews, see~\cite{borodin2005online}; for caching with predictions, see~\cite{lykouris2021competitive,mhaisen2022optimistic}.

\noindent\textbf{Alternative Metrics.}
Beyond binary hit/miss metrics, several systems optimize continuous objectives. Darwin~\citep{chen2023darwin} balances hit rate and disk writes in CDN caching. RobinHood~\citep{berger2018robinhood} and SNC-Meister~\citep{zhu2016snc} target tail latency in web services and multi-tenant systems, respectively. AdaptSize~\citep{berger2017adaptsize} handles variable-size objects in content delivery networks. However, these works focus on traditional web caching rather than the unique characteristics of LLM inference: multi-turn conversations with growing context windows and strict real-time latency requirements.

\noindent\textbf{Prompt Caching for LLM Inference.}
Prompt caching reuses precomputed KV states from conversation history to reduce prefill computation. Recent systems explore various aspects: prefix caching~\citep{kwon2023efficient}, attention reuse~\citep{gim2024prompt}, RAG optimization~\citep{jin2024ragcache}, structured generation~\citep{zheng2024sglang}, load balancing~\citep{srivatsa2024preble}, hierarchical storage~\citep{qin2025mooncake}, and optimal model multiplexing~\citep{zhu2023optimal}. While these works demonstrate the value of prompt caching, they primarily optimize throughput or average latency rather than tail latency.

Recent work on LLM serving has begun addressing tail latency through architectural innovations. Sarathi-Serve~\citep{agrawal2024taming} uses chunked prefill and stall-free scheduling, while DistServe~\citep{zhong2024distserve} disaggregates prefill and decoding for goodput optimization. These systems target tail latency but do not use caching as the primary optimization lever.

\noindent\textbf{Our work.}
To sum up, existing works have studied continuous metrics other than binary cache hit/miss and variable-size objects, but we are the first to use prompt caching as the primary lever to optimise TTFT tail latency, and the first to provide theoretical guarantees for tail-excess latency. Our work bridges the gap between classical caching theory, modern LLM serving systems, and tail latency optimization.

\section{Hindsight Optimal Policy}\label{sec: background}

\noindent\textbf{Deterministic Arrival Trace.} 
To begin, we describe the prompt caching problem for LLM inference under a deterministic arrival trace to study the hindsight optimality, deferring the stochastic model to Section~\ref{sec: stochastic model} for online optimality. Consider a discrete time horizon $t = 1, 2, \dotsc, T$ with $N$ conversations over $T$ steps. For each conversation $i \in [N]$, let $\mathcal{T}_i \subseteq [T]$ denote the set of time steps when conversation $i$ issues a request. Assume $\mathcal{T}_i$ is disjoint from each other. For each $t \in \mathcal{T}_i$, let $q_{i,t}$ and $a_{i,t}$ denote the lengths of the user prompt and model response for that turn, respectively, measured in blocks; otherwise, set $q_{i,t} = a_{i,t} = 0$ for $t \notin \mathcal{T}_i$.

\noindent\textbf{Caching State.}
For $i\in [N]$ and $t\in [T]$, let $x_{i,t}$ denote the number of cached blocks for conversation $i$ at time $t$, \textit{before the request arrival}.
The caching state $\{x_{i,t}\}$ must satisfy the following constraints.

First, the total number of cached blocks cannot exceed the cache capacity $C$ at any time:
\begin{align}
    \sum_{i\in[N]} x_{i,t} \leq C, \qquad \forall t\in[T] \quad \text{(capacity constraint)}
    \label{cons: capacity}
\end{align}

Second, for each conversation, the number of cached blocks cannot exceed the length of its conversation history up to time $t$:
\begin{align}
    x_{i,t+1} \leq \sum_{j=1}^{t}(q_{i,j} + a_{i,j}),
    \qquad \forall i\in[N],\; t\in\mathcal{T}_i,\; t < T
    \label{cons: optional caching}
\end{align}

Third, the cache allocation for a conversation can only increase when a request from that conversation arrives; otherwise, it can only decrease or stay the same:
\begin{align}
    x_{i,t+1} \leq x_{i,t},
    \qquad \forall i\in[N],\; t\notin\mathcal{T}_i,\; t < T
    \quad\text{(cache increases only on arrival)}
    \label{cons: cache on arrival}
\end{align}

A caching policy must determine the caching state $\{x_{i,t}\}$ subject to these constraints. When a request arrives, the server can reuse any cached blocks in $\{x_{i,t}\}$, while any evicted blocks must be recomputed.\footnote{In practice, evicted blocks may be retrieved from other storage layers (e.g., CPU DRAM or SSD). Here, we focus on a single-layer cache and do not consider inter-layer transfers.}

For simplicity, we assume \textit{optional caching}: a request is not required to be added to the cache after serving. Extending to \textit{forced caching} is straightforward (by replacing the inequality in constraint~\eqref{cons: optional caching} with equality); we defer this discussion to Appendix~\ref{apx: forced caching}.

\noindent\textbf{Latency Objective: TEL.}  
We aim to optimize the tail latency metric via the caching policy. Percentile-based tail latency is notoriously difficult to optimize directly, so we introduce the \emph{Tail Excess Latency} (TEL) metric as a tractable surrogate:
\begin{align*}
    \text{TEL} := \sum_{i\in [N],\, t\in \mathcal{T}_i} \max\{\text{TTFT}(i, t) - \xi_{s},\, 0\},
\end{align*}
where $\text{TTFT}(i, t)$ denotes the Time to First Token for the request at time $t$ of conversation $i$ and $\xi_{s}$ is a pre-defined threshold. TEL is analogous to Conditional Value at Risk (CVaR), but with an explicit, user-specified threshold~\citep{bauerle2011markov,chow2015risk}; it also resembles SLO attainment metrics~\citep{zhong2024distserve}, with the distinction that TEL penalizes larger violations proportionally more. Setting $\xi_{s} = 0$ recovers the average-latency objective. While our theoretical analysis centers on TEL, our experiments report conventional metrics (tail latency and SLO violation rate) for straightforward comparison in future studies.

As discussed in~\eqref{eq:linear-TTFT}, to simplify the analysis and guide policy design, we assume TTFT grows linearly with the number of uncached blocks—an assumption empirically justified (see Figure~\ref{fig:linear-uncached-token}) by the dominance of feed-forward linear layers during inference~\citep{kamath2025pod,zhu2024nanoflow,ye2025flashinfer}. Thus,
\begin{align*}
\text{TEL} = \alpha \sum_{i \in [N],\, t \in \mathcal{T}_i} \max\left\{\sum_{j=1}^{t} q_{i,j} + \sum_{j=1}^{t-1} a_{i,j} - x_{i,t} - \xi,\, 0\right\}
\end{align*}
where $\xi = \xi_{s} / \alpha$, and $\sum_{j=1}^{t} q_{i,j} + \sum_{j=1}^{t-1} a_{i,j} - x_{i,t}$ is the number of blocks that need to be computed the request of conversation $i$ at time $t$ (here, $\sum_{j=1}^{t-1} q_{i,j} + \sum_{j=1}^{t-1} a_{i,j}$ is the total number of blocks in the conversation history, and $q_{i,t}$ corresponds to the new request prompt at time $t$).

\begin{figure}[!htbp]    \centering\includegraphics[width=0.6\linewidth]{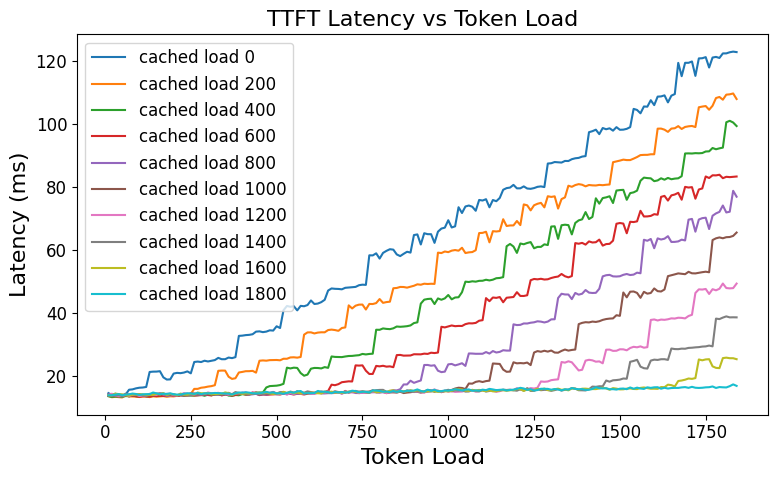}
        \caption{Experimental results demonstrate that TTFT increases approximately linearly with the number of uncached tokens. The plot shows TTFT latency as a function of the total prompt length (Token Load) and the size of the cached prefix (Cached Load). We conducted our experiments using Vicuna-7B with vLLM’s prefix caching enabled on a Colab A100 GPU.}
    \label{fig:linear-uncached-token}
\end{figure}

\noindent\textbf{Hindsight Optimal Policy for TEL}
With both the decision timeline and objective clarified, we next ask: \emph{If the system knew the entire future arrival trace, what caching policy would minimize TEL?} Understanding this hindsight-optimal benchmark unveils what's important to improve tail latency and thus guides our policy design.

The hindsight optimal policy chooses cache variables $\{x_{i,t}\} \in \mathbb{N}$: the number of cache blocks conversation $i$ can reuse at the beginning of step $t$, and slack variables $u_{i,t}\geq 0$ to minimize TEL:
\begin{align}
  \min_{x_{i,t},u_{i,t} \in \mathbb{N}}\;&\sum_{i\in[N]}\sum_{t\in\mathcal{T}_i} u_{i,t}
  \label{eq: opt for min TEL}\\
  \text{s.t. } \; & \eqref{cons: capacity}, \eqref{cons: optional caching}, \eqref{cons: cache on arrival} \notag\\
  &u_{i,t}\ge\sum_{j=1}^t q_{i,j}+\sum_{j=1}^{t-1}a_{i,j}-x_{i,t}-\xi,
  &&\forall i\in[N],\;t\in\mathcal{T}_i \quad\text{(slack variable)} \notag
\end{align}

Here $u_{i,t}$ describes the TEL objective as the $u_{i,t}$ equals $(\sum_{j=1}^t q_{i,j} + \sum_{j=1}^{t-1} a_{i,j} - x_{i,t} -\xi)^+$ in the optimal solution.

\begin{theorem}[Hindsight-Optimal Policy Structure]\label{thm: modified baledy is optimal}
The hindsight-optimal policy that minimizes TEL restricts the number of KV-cache blocks allocated to each conversation within the TEL-safe budget, given by $\left(\sum_{j=1}^{t} q_{i,j} + \sum_{j=1}^{t-1} a_{i,j} - \xi\right)^{+}$. If the total allocation exceeds the cache capacity, the policy evicts cache blocks from conversations whose next requests are expected to arrive furthest in the future.
\end{theorem}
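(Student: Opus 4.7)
I would prove the two structural claims in sequence by exchange-style arguments applied to the optimization~\eqref{eq: opt for min TEL}. Throughout, let $B_{i,t} := \bigl(\sum_{j=1}^t q_{i,j} + \sum_{j=1}^{t-1} a_{i,j} - \xi\bigr)^+$ denote the TEL-safe budget at arrival $t \in \mathcal{T}_i$.

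\emph{Step 1: TEL-safe trimming.} First, I would show that any optimal $(x^*, u^*)$ can be replaced by a feasible solution satisfying $x_{i,t} \le B_{i,t}$ at every arrival $t \in \mathcal{T}_i$. The key observation is that whenever $x^*_{i,t} > B_{i,t}$, the right-hand side $\sum_{j \le t} q_{i,j} + \sum_{j < t} a_{i,j} - x_{i,t} - \xi$ of the slack constraint is negative, so $u^*_{i,t} = 0$ is optimal regardless of $x_{i,t}$. Capping $x_{i,t}$ at $B_{i,t}$ and propagating the reduction forward via $\tilde x_{i,s} := \min\{x^*_{i,s}, \tilde x_{i,t}\}$ for $s$ between $t$ and the next arrival preserves the monotonicity constraint~\eqref{cons: cache on arrival}, only relaxes the capacity constraint~\eqref{cons: capacity}, and still allows cache regrowth at the next arrival under~\eqref{cons: optional caching}. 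The modified schedule is feasible with the same objective value, so the optimum may be taken to satisfy the budget cap.

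\emph{Step 2: Furthest-next-arrival eviction.} Under the cap, the slack at optimality becomes $u_{i,t} = B_{i,t} - x_{i,t}$, so minimizing TEL is equivalent to maximizing the total useful cache mass $\sum_{i,\, t \in \mathcal{T}_i} x_{i,t}$ subject to~\eqref{cons: capacity}--\eqref{cons: cache on arrival} together with $x_{i,t} \le B_{i,t}$. I would then view each prefix block of each conversation as a unit-size item; since the history length (and hence $B_{i,t}$) is non-decreasing along $\mathcal{T}_i$, the ``useful'' status of any block is monotone in $t$, and each cached block's next useful access is precisely the next element of $\mathcal{T}_i$ after the current time. This reduces the problem to an offline unit-size caching instance with fetch-on-access, and Belady's classical exchange argument yields the desired conclusion: whenever capacity binds, it is optimal to evict blocks from the conversation whose next arrival is furthest in the future.

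\emph{Main obstacle.} The technical crux I anticipate is the single-swap lemma underlying the Belady reduction in our setting. Concretely, suppose at some time $\tau$ a schedule evicts a block of conversation $i$ with next arrival $t_i$ while retaining a block of conversation $j$ with $t_j > t_i$. I would construct a swapped schedule that retains $i$'s block through $t_i$, omits $j$'s block on $[\tau, t_j)$, and re-caches $j$'s block at $t_j$ using capacity freed after $i$'s request at $t_i$. Verifying capacity feasibility at every intermediate step, while respecting the cache-grows-only-at-arrival constraint~\eqref{cons: cache on arrival}, is the main bookkeeping challenge. Once this single-swap lemma is in place, a finite induction over non-Belady eviction pairs transforms any optimal schedule into the form described in the theorem.
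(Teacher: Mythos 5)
Your proposal follows essentially the same two-step structure as the paper's proof: first cap $x_{i,t}$ at the TEL-safe budget $B_{i,t}$ by noting the slack variable vanishes whenever the cap is exceeded, then observe that under the cap the objective reduces to maximizing $\sum_{i, t\in\mathcal{T}_i} x_{i,t}$, which is a unit-page offline paging instance whose hindsight optimum is Bélády's furthest-in-future rule applied conversation-by-conversation. Your Step 1 is actually slightly more careful than the paper's (you explicitly propagate the cap forward via $\tilde x_{i,s} := \min\{x^*_{i,s}, \tilde x_{i,t}\}$ to preserve constraint~\eqref{cons: cache on arrival}, whereas the paper's contradiction argument is looser on this point), and your "main obstacle" paragraph flags the single-swap exchange lemma that the paper simply imports by citing Bélády; neither difference changes the substance of the argument.
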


See Appendix \ref{apx: hindsight optimal policy} for detailed proof.
Setting $\xi = 0$ recovers average latency minimization, and in this case, the theorem implies the furthest-in-future eviction strategy is optimal---we recover the Belady optimal policy, which was shown to maximize cache hit rate in classical caching problems~\citep{belady1966study}.
Therefore, the hindsight optimal policy for minimizing TEL can be characterized as a threshold-capped version of the Belady policy.
We call this policy Tail-Optimized Belady.

\section{Tail-Optimized LRU}\label{sec:algorithm}
Inspired by the Tail-Optimized Belady policy (Theorem \ref{thm: modified baledy is optimal}), we propose Tail-Optimized LRU, an online policy that also maps into the future and caches \emph{just enough} to prevent conversations' next turns from affecting \text{TEL}. 
To adapt the hindsight optimal policy to online settings, we make two modifications: 1) replace the actual next-prompt length $Q$ with an estimate $\hat{Q}$; 2) use the least-recently-used eviction strategy rather than the furthest-in-future.

Pseudocode is given in \ref{alg:tail-optimized-lru}. We believe Tail-Optimized LRU is a practical, low-friction upgrade for any LLM caching system that already relies on LRU.

{\scriptsize
\begin{algorithm}[!htbp]
\caption{Tail‑Optimized LRU Policy}
\label{alg:tail-optimized-lru}
\SetAlgoLined
\SetAlgoNoEnd
\DontPrintSemicolon
\KwIn{Number of conversations $N$, Timestamp of Last Turn $\{\tau_i\}$, Number of cached blocks $\{X_i\}$, conversation history lengths $\{L_i\}$, arriving conversation $\theta$, arriving conversation length $L'_\theta$} 
\KwParam{Policy parameters: threshold $\xi$, next‑turn length estimate $\{\hat{Q}_i\}$}

\KwOut{Updated cache sizes $\{X_i\}$} 
$L_\theta \gets L'_\theta, X_\theta \gets L_\theta, \tau_\theta \gets \text{Current Timestamp}$ \tcp*{Update system state for arriving conversation}
\If{$\sum_{i\in [N]} X_i > C$}{ 
  \ForEach{$i\in [N]$}{
    \If{$X_i\ge L_i + \hat{Q}_i - \xi$}{
        $X_i \gets X_i -1$ \tcp*{``Free Eviction'' under TEL objective}
        \If{$\sum_{i\in [N]} X_i \leq C$}{\Return{$X$}}
    }
  }
}
\While{$\sum_{i\in [N]} X_i > C$}{ 
  Find $j = \arg\min_{i \in [N]: X_i \geq 1} \tau_i$ \tcp*{Evict using LRU}
  $X_j\gets X_j-1$\\  
}
\Return{$X$}
\end{algorithm}
}

\noindent\textbf{Lightweight Integration with Existing Caching Systems.}\label{sec: easy adoption}
Implementing Tail-Optimized LRU in a cache system based on LRU only requires an extra bookkeeping: mark blocks that can be evicted ``for free'' (i.e., blocks identified in the proactive trimming phase) as infinitely old. Then these blocks can be evicted seamlessly by any existing LRU engine in the usual way. 
This design is also compatible with paged KV cache technique which stores keys and values in non-contiguous memory space.

\section{Optimality of Tail-Optimized LRU in a Stochastic Conversation Model}\label{sec: stochastic model}
In this section, we prove the optimality of a generalized Tail-Optimized LRU policy under stochastically generated traces, which covers the optimality of T-LRU (Policy \ref{alg:tail-optimized-lru}) and LRU as special cases. 

\noindent\textbf{Stochastic Conversation Model.}
Classic caching models fail to capture the nature of LLM workload: unlike traditional cache systems where objects have static sizes and independent access patterns, LLM workloads consist of multi-turn conversations that dynamically start, evolve, and terminate over time; see related work in Section~\ref{sec: related work}. To address this gap, we build a novel stochastic model that characterizes these unique features of LLM workloads. 

To characterize multi-turn conversations, our model must address four fundamental questions: (1) when do new conversations start? (2) when will an active conversation generate its next prompt? (3) how many tokens appear in prompts and responses? (4) when do conversations terminate?
Real traces from \cite{zhao2024wildchat} reveals strong \emph{temporal locality} in multi-turn conversations: \textit{the longer a user goes without sending their next prompt, the less likely they are to ever return}. 
This observation has a direct practical implication: least-recently-used conversations are also least-likely-to-return. Consequently, when cache capacity constraints force eviction decisions, prioritizing recently active conversations aligns with their probability of future requests.
We capture this pattern by modeling the number of active conversations as a continuous-time birth-death process. Specifically, 
\begin{itemize}
    \item New conversations are ``born" at rate $\lambda_{\textsf{conv}}>0$, and each active conversation ``dies'' at rate $\mu>0$. We index conversations by their arrival order, $i = 1,2,\ldots$.
    \item While active, conversation $i$ generates requests according to an independent Poisson process with rate $\bar{\lambda}_i>0$. 
    \item At each turn, a random prompt length $Q$ is drawn from a known (possibly conversation-specific) distribution; the length of model responses $A$ follows an arbitrary distribution that the decision-maker does not need to know. 
\end{itemize}

This design has an appealing property: conversations with exponential ``death clocks" naturally implement the temporal locality observed empirically. The longer a conversation stays quiet, the more likely it has terminated, making it a safe candidate for cache eviction. Building on this insight, we develop T-LRU, a policy that maintains LRU ordering while prioritizing conversations whose next request would breach a time-to-first-token (TTFT) latency threshold. In this section, we prove that a generalized version of T-LRU that minimizes expected total eviction loss (TEL) is optimal under our stochastic conversation model.

For simplification, we assume that KV caches cannot be reused across different conversations. This assumption is also motivated by security and privacy concerns: e.g., vLLM implement cache isolation to prevent timing-based inference attacks.\footnote{\url{https://docs.vllm.ai/en/stable/design/v1/prefix_caching.html}}

\noindent\textbf{Belief Markov Decision Process.}
The decision-maker only observes the turns as they arrive, but departures are never observed. Thus the problem is modeled as a partially observable Markov decision process (POMDP), where the optimal policy is defined for each possible belief state over the POMDP states.
As the conversations arrive and depart independently, we can decompose the belief state to be the individual expected turn rates of each conversation. 

Let $\pi_i(t)$ denote the decision-maker's belief at time $t$ that conversation $i$ is still active, then its expected turn-arrival rate is $\pi_i(t)\cdot \bar{\lambda}_i$. The belief is updated using 
\[\pi_i(t) = \exp(-\mu(t - \text{last turn time})),\]
as each conversation lasts for an exponential amount of time with mean $\mu$.

Therefore, the system state at time $t$ of the belief MDP is given by $(\bm{\lambda}(t), \bm{L}(t), \bm{X}(t))$, where $L_i(t)$ is the total length (blocks) of conversation $i$ at time $t$ and $X_i(t)$ is the number of KV cache blocks from conversation $i$, all prior to the arrival at time $t$. The decision-maker chooses $\bX'$, which blocks to cache after serving each request, to minimize the Tail Excess Latency for $M$ requests for arbitrary $M \in \mathbb{N}$. Appendix \ref{apx: proof of ETLRU} details the finite‑horizon Bellman equation.

\noindent\textbf{Expected-Tail-Optimized LRU (ET-LRU).}
ET-LRU chooses the post-arrival cache allocation that minimizes the expected TEL at the next turn, using current beliefs of turn-arrival rates. Formally,
\begin{definition}[Expected Tailed-Optimized LRU]\label{def: TLRU rand}
Let $\theta$ denote the index of the conversation arriving at time $t$ with new prompt length $Q$ and model response length $A$, $\bX(t^-)$ denote the cache state \textit{before} arrival, and $\bm{\lambda}(t^+),\bm{L}(t^+)$ denote the belief turn-arrival rates and conversation history lengths updated \textit{after} service ($\lambda_\theta(t^+) = \bar{\lambda}_\theta, L_\theta(t^+) = L_\theta(t^-) +Q + A$).
ET-LRU chooses
{\small
\begin{align}
   \bm{X}^{\textsf{ETLRU}} \in \arg\min & \sum_{i} \lambda_i(t^+) \cdot \mathbb{E}[(L_i(t^+) + Q_i - Y_i -\xi)^+] \label{eq: opt rand new arrival} \\
   \text{s.t. }&  Y_\theta \leq L_\theta(t^+),  \quad\text{(optional caching)}  \label{cons: ETLU optional caching}\\
   & Y_i \leq X_i(t^-), \forall i\neq \theta,  \quad\text{(cannot conjure caches)} \label{cons: ETLU cache upon arrival}\\
   & \sum_{i} Y_i \leq C.  \quad\text{(capacity constraint)} \notag
\end{align}}
where the expectation is taken over $Q_i$, the random user prompt length for conversation $i$ at its next arrival. 
\end{definition}
The probability that conversation $i$ generates the next request is proportional to its belief turn-arrival rates $\lambda_i(t^+)$, thus $\lambda_i(t^+) \cdot \mathbb{E}[(L_i(t^+) + Q_i - X_i'-\xi)^+]$ is the expected TEL contributed by conversation $i$. 
Here constraint \eqref{cons: ETLU optional caching} implies that the decision-maker can cache at most the total conversation history of conversation $\theta$ just served; constraint \eqref{cons: ETLU cache upon arrival} says a conversation's cache allocation can grow only whenn it arrives.

\begin{theorem}[Optimality of Expected-Tail-Optimized LRU]\label{thm: optimality of tail LRU rand}
Expected-Tail-Optimized LRU (Definition \ref{def: TLRU rand}) is an optimal online caching policy for minimizing Tail Excess Latency under the stochastic conversation model above.
\end{theorem}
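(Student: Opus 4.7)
The plan is to prove optimality by backward induction on the horizon $M$ (number of remaining arrivals), building directly on the belief MDP framework. I begin by writing the Bellman equation for the POMDP with state $(\bm{\lambda},\bm{L},\bm{X})$ at each post-service decision epoch. Let $V_M$ denote the minimum expected TEL over the remaining $M$ arrivals; then
\[
V_M(\bm{\lambda},\bm{L},\bm{X}) \;=\; \mathbb{E}_{\theta',Q',A',\tau'}\Bigl[(L_{\theta'}+Q'-X_{\theta'}-\xi)^{+} \;+\; \min_{\bm{Y}\in\mathcal{Y}(\bm{X},\theta',\bm{L}')} V_{M-1}(\bm{\lambda}'',\bm{L}',\bm{Y})\Bigr],
\]
where $\theta'$ is the identity of the next arrival (possibly newborn), $\bm{L}'$ and $\bm{\lambda}''$ are the length vector and posterior belief updated at the arrival time, and $\mathcal{Y}$ encodes the constraints \eqref{cons: ETLU optional caching}--\eqref{cons: ETLU cache upon arrival} and capacity. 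The base case $V_0\equiv 0$ is immediate; for $M=1$, ET-LRU is trivially optimal since it directly minimizes the one remaining expected TEL term.

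For the inductive step, I must show that the allocation minimizing only the immediate expected TEL (ET-LRU's criterion) also minimizes the full Bellman right-hand side including the expected value-to-go $\mathbb{E}[V_{M-1}]$. My plan is to carry two structural invariants along the induction. \textbf{(i) Monotonicity}: $V_M$ is componentwise nonincreasing in $\bm{X}$; this is proved by a free-disposal coupling, because optional caching lets any policy discard extra cache at no cost. \textbf{(ii) Marginal alignment}: the long-run marginal value of a cache block in conversation $i$ is proportional to $\lambda_i$, saturating at the TEL-safe cutoff $L_i+\hat Q_i-\xi$, exactly matching the myopic marginal value used by ET-LRU. With (i) and (ii) in hand, I complete the inductive step through an exchange argument: starting from any competing feasible $\tilde{\bm{Y}}$, I transfer single blocks toward the ET-LRU ordering one at a time, using (i) to argue that the value-to-go only weakly decreases and (ii) to guarantee that the sum of immediate and future marginal gains agrees in sign with ET-LRU's greedy rule.

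The main obstacle I anticipate is invariant (ii). The difficulty is that the capacity constraint $\sum_i Y_i\le C$ couples conversations through the constraint $Y''_i\le Y_i$ at subsequent epochs, so it is not a priori clear that the long-run marginal value retains the clean $\lambda_i$-proportional form the myopic criterion relies on. The plan to overcome this is to exploit the twin memoryless features of the stochastic conversation model: exponential conversation lifetimes and Poisson intra-conversation turn arrivals. Because an arriving conversation's cache is freely resetable upon its return (up to its full history), any persistent cache in $i$ contributes only until $i$'s next arrival, whose conditional rate is exactly $\lambda_i$; this should yield the required factorization of the future contribution as $\lambda_i$ times a function depending on $(L_i,X_i)$ alone. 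If strict separability proves too strong, I would fall back to a supermodularity-in-$\bm{X}$ property of $V_M$, which still suffices to justify the block-by-block exchange step and thus the optimality of ET-LRU.
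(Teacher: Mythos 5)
Your high-level plan---backward induction on the number of remaining arrivals, plus a single-block exchange argument---matches the paper's strategy, but the specific structural invariant you build the exchange on is not what the paper uses and is likely stronger than what is actually true. Invariant (ii), the ``marginal alignment'' claim that the long-run marginal value of a cached block in conversation $i$ factorizes as $\lambda_i$ times a function of $(L_i,X_i)$ alone and saturates exactly at the TEL-safe cutoff, is essentially a restatement of the theorem's conclusion (myopic greedy $=$ long-run optimal) rather than a lemma from which it follows, and you yourself flag that capacity coupling at future epochs obstructs a clean factorization. The fallback to supermodularity of $V_M$ in $\bm X$ is also left unproved, and it is not obvious that supermodularity by itself pins down the $\lambda_i$-weighted ordering that ET-LRU uses, as opposed to merely justifying some greedy-like exchange.

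What the paper does instead is avoid asserting any separability of the value function. Starting from the Bellman recursion, it fixes two candidate post-service allocations that differ in a single evicted block---one evicting from the conversation $i(1)$ with the smallest ranking score $\lambda_i\,\mathbb P(L_i+Q_i-\xi\ge X_i)$, the other from some $i(k)$---and then conditions on the identity of the next arrival. The crucial mechanism is the optional-caching constraint \eqref{cons: ETLU optional caching}: when conversation $j$ arrives, its cache can be rebuilt up to its full history regardless of what was held before, so the single-block difference between the two states is ``washed out'' whenever the differing conversation is the next arriver. In the remaining cases the difference persists as a single-block difference in a non-arriving conversation, and the induction hypothesis (that the optimal continuation is itself ET-LRU and hence respects the same ranking) lets you compare the two continuations directly without ever needing a global factorization of $V_M$. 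Your invariant (i), monotonicity of $V_M$ in $\bm X$ via free disposal, is correct and harmless, but it is not the load-bearing piece. To close the gap, you should replace invariant (ii) with the per-arrival case analysis that exploits cache reset on arrival, and let the induction hypothesis carry the ordering information rather than a standalone separability claim.
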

The proof is by induction on the number of turns and comparing the value-to-go functions under our policy and another policy that evicts differently. 
Here, the least-recently-used time reflects the expected turn-arrival rate and thus can approximate furthest-in-future.
Theorem \ref{thm: optimality of tail LRU rand} has three implications:
\begin{itemize}
    \item LRU is optimal for average latency. Setting $\xi = 0$ and assuming homogeneous turn-arrival rates across conversations, the objective in optimization problem \eqref{eq: opt rand new arrival} reduces to $\max \sum_{i} \exp(-\mu(t - \text{last turn time}))Y_i$, thus ET-LRU reduces to LRU. Therefore, Theorem \ref{thm: optimality of tail LRU rand} establishes the optimality of LRU for minimizing average latency under our stochastic arrival model. To the best of our knowledge, such a result had not previously been established. 
    \item Optimality of Tail-Optimized LRU: if the user prompt length is deterministic, Expected-Tail-Optimized LRU is reduced to a deterministic version as stated in Policy \ref{alg:tail-optimized-lru} (with estimate $\hat{Q}$ replaced by deterministic $Q$). Theorem \ref{thm: optimality of tail LRU rand} thus establishes the optimality of Tail-Optimized LRU in this model.
    \item When $Q=0$ after the first turn and responses also have zero length, our model reduces to classic caching with unit page size. In this case, the optimal policy ``evicts the block whose conversation is least likely to return'', generalizing least-recently-used.
\end{itemize}
Therefore, Expected-Tail-Optimized LRU serves as a common backbone across three classic caching regimes, providing theoretical justification for adopting LRU and Tail-Optimized LRU for LLM inference workload.

\section{Experiments}\label{sec: experiment}

\subsection{Datasets and Metrics}  
We evaluate our caching policies on two conversational datasets:  WildChat and ShareGPT (Figure~\ref{fig: wildchat distribution} and Figure~\ref{fig: sharegpt distribution}). For each dataset, we select conversations based on their arrival timestamps and extract the first 1000--2000 turns across these conversations. Specifically, we sample conversations in chronological order (by first-turn timestamp), split each conversation into individual turns, and simulate their arrivals following the observed timestamps in the trace.

We measure latency metrics (median, P90, P95, P99) and service-level objective (SLO) attainment under different caching policies. Our default experimental configuration uses Vicuna-7B served on a single A100 GPU via vLLM with tensor parallelism disabled (TP = 1) and without mixed batching. We present the results for WildChat below. Due to space constraints, Results for ShareGPT, which exhibit qualitatively similar patterns, are provided in Appendix~\ref{apx: additional experiments}.
\begin{figure}[!htbp]
    \centering
    \includegraphics[width=\linewidth]{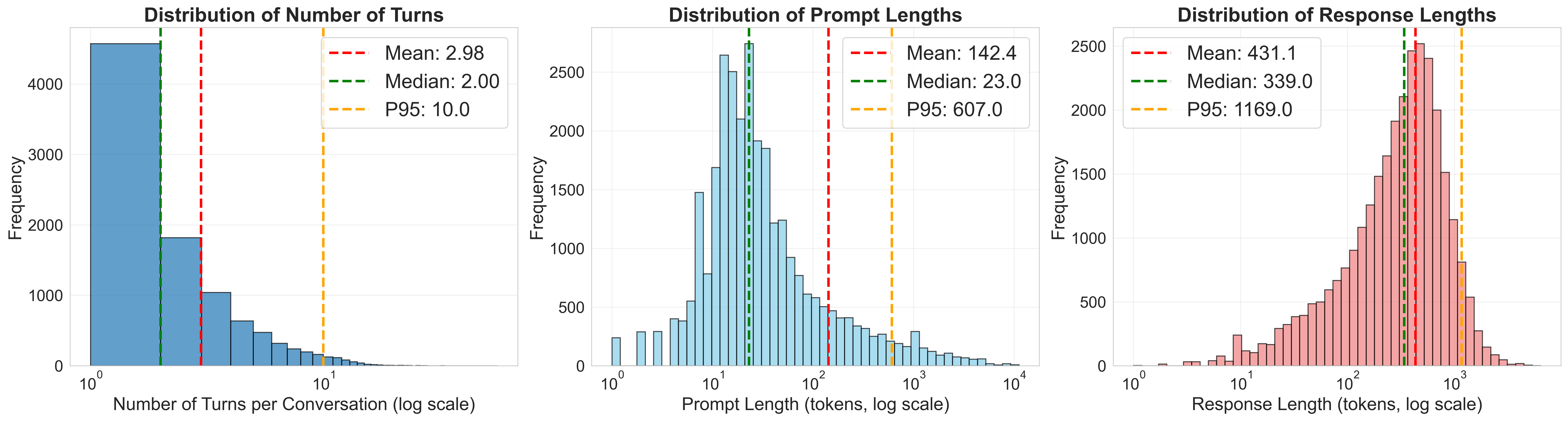}
    \caption{Distributions of turns and tokens of WildChat~\citep{zhao2024wildchat} datasets (sampled 10,000 conversations).}
    \label{fig: wildchat distribution}
\end{figure}

\subsection{Tail Latency Reduction}
We measure the tail latency reduction of our policy against LRU and \textit{Threshold LRU}---caches \emph{only if} the conversation length exceeds a fixed threshold, a configuration for LRU we observed in practice. We fix the input parameter for T-LRU, next-prompt length $\hat{Q}$, to be the average prompt length (200 for WildChat, 150 for ShareGPT), and use 1024 as the threshold for Threshold-LRU following the one used by OpenAI.\footnote{\url{https://platform.openai.com/docs/guides/prompt-caching}}

In Tables~\ref{tab:improvement-tlru-lru}--\ref{tab:improvement-tlru-thre}, we report the relative latency improvements of T-LRU over both LRU and Threshold-LRU across various cache capacities~$C$ and tail-latency thresholds~$\xi_{s}$.\footnote{$C = 10,000$ corresponds to approximately 4.8 GB for Vicuna-7B with Float16, see Appendix \ref{apx: KV cache size}.} We observe that T-LRU reduces P90 tail TTFT by up to 27.5\% and P95 tail TTFT by up to 23.9\% compared to LRU, and achieves comparable improvements over Threshold-LRU.

\noindent\textbf{Sensitivity to latency threshold $\xi_{s}$.}
The benefit of T-LRU peaks when latency threshold $\xi_{s}$ is calibrated to match the target tail percentile of interest.
\textit{For example, with capacity $C = 1000$ under LRU, medium, P90, P95, and P99 tail latencies are roughly 40 ms, 240 ms, 326 ms, and 505 ms.}
Setting $\xi_{s} = 200$ ms yields the largest improvement in P90 tail latency; setting $\xi_{s} = 300$ ms yields the largest improvement in P95 tail latency. 
A high value, e.g. $\xi_{s} = 500$ ms, relaxes protection for moderate tails and only provides protection for extreme tails ---up to 3\% of increase in P99 tail latency.\footnote{We set $\xi =\xi_s/c$ using a simple linear fit between awaited prefill tokens and prefill time, which is easy to implement in practice.}

Operationally, the latency threshold $\xi_{s}$ can be tuned either based on fixed service level objective (e.g., a target TTFT), or adaptively based on the observed tail latency of served turns. For example, the decision-maker can periodically update $\xi_{s}$ to match the desired tail latency TTFT observed over recent turns. 
Raising $\xi_{s}$ makes TEL-safe trimming more aggressive, and could potentially increase average latency as the policy allows more turns to incur latency up to $\xi_{s}$, reflecting the classic trade-off between average and tail performance.

\noindent\textbf{Comparing Threshold-LRU and T-LRU.}
Threshold-LRU is a straightforward patch for LRU’s blindness to conversation length. In industry systems such as at OpenAI, prompt caching is enabled only when the running conversation history exceeds a fixed length. While simple to implement, this policy makes a binary, all-or-nothing decision for each conversation: either cache the entire conversation history or cache nothing at all. In contrast, T-LRU makes adaptive, fine-grained caching decisions. Rather than caching entire conversations indiscriminately, T-LRU caches just enough tokens from each conversation to ensure its next request will not breach the latency threshold. This fundamental difference persists regardless of prompt length. Even with zero-length prompts, Threshold-LRU would apply its binary rule based on history length, while T-LRU would adaptively cache the amount that is just right so not impacting the tail.

\begin{table*}[!htbp]\centering
\caption{Relative latency improvement of T-LRU over LRU with various $\xi_s$ }
\scriptsize		
\begin{tabular}{rrrrrrrrrrrr}\toprule
&\multicolumn{2}{c}{$\xi_s$ = 50ms} &\multicolumn{2}{c}{$\xi_s$ = 100ms} &\multicolumn{2}{c}{$\xi_s$ = 150ms} &\multicolumn{2}{c}{$\xi_s$ = 200ms} &\multicolumn{3}{c}{$\xi_s$ = 500ms}  \\\cmidrule{1-12}
Capacity &p90 &p95 &p90 &p95 &p90 &p95 &p90 &p95 &p90 &p95 &p99  \\\midrule
1000 &\cellcolor[HTML]{c9e9d9}4.0\% &\cellcolor[HTML]{ffffff}0.0\% &\cellcolor[HTML]{c9e9d9}4.5\% &\cellcolor[HTML]{dcf1e6}1.0\% &\cellcolor[HTML]{b8e2ce}5.3\% &\cellcolor[HTML]{dcf1e6}1.5\% &\cellcolor[HTML]{b8e2ce}7.3\% &\cellcolor[HTML]{c9e9d9}2.0\% &\cellcolor[HTML]{ffd6d6}-1.4\% &\cellcolor[HTML]{ffd6d6}-1.3\% &\cellcolor[HTML]{c9e9d9}3.2\%  \\
2000 &\cellcolor[HTML]{dcf1e6}1.2\% &\cellcolor[HTML]{dcf1e6}0.6\% &\cellcolor[HTML]{c9e9d9}3.4\% &\cellcolor[HTML]{dcf1e6}0.6\% &\cellcolor[HTML]{b8e2ce}5.1\% &\cellcolor[HTML]{c9e9d9}3.4\% &\cellcolor[HTML]{b8e2ce}9.2\% &\cellcolor[HTML]{c9e9d9}4.4\% &\cellcolor[HTML]{ffd6d6}-4.9\% &\cellcolor[HTML]{ffd6d6}-2.8\% &\cellcolor[HTML]{c9e9d9}3.3\%  \\
4000 &\cellcolor[HTML]{dcf1e6}1.7\% &\cellcolor[HTML]{dcf1e6}0.7\% &\cellcolor[HTML]{c9e9d9}4.1\% &\cellcolor[HTML]{c9e9d9}2.8\% &\cellcolor[HTML]{96d4b6}10.5\% &\cellcolor[HTML]{c9e9d9}4.2\% &\cellcolor[HTML]{96d4b6}13.3\% &\cellcolor[HTML]{96d4b6}10.8\% &\cellcolor[HTML]{ffd6d6}-7.5\% &\cellcolor[HTML]{ffd6d6}-3.4\% &\cellcolor[HTML]{c9e9d9}3.4\%  \\
6000 &\cellcolor[HTML]{c9e9d9}5.0\% &\cellcolor[HTML]{c9e9d9}2.1\% &\cellcolor[HTML]{96d4b6}10.6\% &\cellcolor[HTML]{c9e9d9}4.0\% &\cellcolor[HTML]{85cdaa}16.0\% &\cellcolor[HTML]{96d4b6}11.4\% &\cellcolor[HTML]{96d4b6}14.3\% &\cellcolor[HTML]{85cdaa}15.4\% &\cellcolor[HTML]{ffd6d6}-8.7\% &\cellcolor[HTML]{ffd6d6}-4.2\% &\cellcolor[HTML]{c9e9d9}3.4\%  \\
8000 &\cellcolor[HTML]{c9e9d9}2.2\% &\cellcolor[HTML]{dcf1e6}0.7\% &\cellcolor[HTML]{96d4b6}10.5\% &\cellcolor[HTML]{b8e2ce}8.5\% &\cellcolor[HTML]{57bb8a}23.0\% &\cellcolor[HTML]{85cdaa}15.8\% &\cellcolor[HTML]{b8e2ce}8.8\% &\cellcolor[HTML]{85cdaa}19.5\% &\cellcolor[HTML]{ffd6d6}-13.7\% &\cellcolor[HTML]{ffd6d6}-3.5\% &\cellcolor[HTML]{dcf1e6}1.4\%  \\
10000 &\cellcolor[HTML]{c9e9d9}3.6\% &\cellcolor[HTML]{b8e2ce}7.4\% &\cellcolor[HTML]{85cdaa}20.0\% &\cellcolor[HTML]{85cdaa}15.7\% &\cellcolor[HTML]{57bb8a}27.5\% &\cellcolor[HTML]{57bb8a}20.1\% &\cellcolor[HTML]{b8e2ce}6.9\% &\cellcolor[HTML]{57bb8a}23.9\% &\cellcolor[HTML]{ffd6d6}-13.3\% &\cellcolor[HTML]{ffd6d6}-3.5\% &\cellcolor[HTML]{ffd6d6}-0.0\%  \\
\bottomrule
\end{tabular} \label{tab:improvement-tlru-lru}
\end{table*}
\begin{table*}[!htbp]\centering
\caption{Relative latency improvement of T-LRU over Threshold-LRU with various $\xi_s$}
\scriptsize	
\begin{tabular}{rrrrrrrrrrrr}\toprule
&\multicolumn{2}{c}{$\xi_s$ = 50ms} &\multicolumn{2}{c}{$\xi_s$ = 100ms} &\multicolumn{2}{c}{$\xi_s$ = 150ms} &\multicolumn{2}{c}{$\xi_s$ = 200ms} &\multicolumn{3}{c}{$\xi_s$ = 500ms}  \\\cmidrule{1-12}
Capacity &p90 &p95 &p90 &p95 &p90 &p95 &p90 &p95 &p90 &p95 &p99  \\\midrule
1000 &\cellcolor[HTML]{dcf1e6}1.5\% &\cellcolor[HTML]{ffffff}0.0\% &\cellcolor[HTML]{c9e9d9}2.0\% &\cellcolor[HTML]{dcf1e6}1.0\% &\cellcolor[HTML]{c9e9d9}2.9\% &\cellcolor[HTML]{dcf1e6}1.5\% &\cellcolor[HTML]{c9e9d9}4.8\% &\cellcolor[HTML]{c9e9d9}2.0\% &\cellcolor[HTML]{ffd6d6}-4.0\% &\cellcolor[HTML]{ffd6d6}-1.3\% &\cellcolor[HTML]{c9e9d9}3.2\%  \\
2000 &\cellcolor[HTML]{dcf1e6}0.4\% &\cellcolor[HTML]{dcf1e6}0.6\% &\cellcolor[HTML]{c9e9d9}2.7\% &\cellcolor[HTML]{dcf1e6}0.6\% &\cellcolor[HTML]{c9e9d9}4.4\% &\cellcolor[HTML]{c9e9d9}3.4\% &\cellcolor[HTML]{b8e2ce}8.6\% &\cellcolor[HTML]{c9e9d9}4.4\% &\cellcolor[HTML]{ffd6d6}-5.7\% &\cellcolor[HTML]{ffd6d6}-2.8\% &\cellcolor[HTML]{c9e9d9}3.3\%  \\
4000 &\cellcolor[HTML]{dcf1e6}0.3\% &\cellcolor[HTML]{ffffff}0.0\% &\cellcolor[HTML]{c9e9d9}2.7\% &\cellcolor[HTML]{c9e9d9}2.1\% &\cellcolor[HTML]{b8e2ce}9.3\% &\cellcolor[HTML]{c9e9d9}3.5\% &\cellcolor[HTML]{96d4b6}12.0\% &\cellcolor[HTML]{96d4b6}10.2\% &\cellcolor[HTML]{ffd6d6}-9.0\% &\cellcolor[HTML]{ffd6d6}-4.2\% &\cellcolor[HTML]{c9e9d9}3.4\%  \\
6000 &\cellcolor[HTML]{c9e9d9}4.7\% &\cellcolor[HTML]{dcf1e6}1.2\% &\cellcolor[HTML]{96d4b6}10.4\% &\cellcolor[HTML]{c9e9d9}3.1\% &\cellcolor[HTML]{85cdaa}15.7\% &\cellcolor[HTML]{96d4b6}10.6\% &\cellcolor[HTML]{96d4b6}14.1\% &\cellcolor[HTML]{96d4b6}14.6\% &\cellcolor[HTML]{ffd6d6}-9.1\% &\cellcolor[HTML]{ffd6d6}-5.2\% &\cellcolor[HTML]{c9e9d9}3.4\%  \\
8000 &\cellcolor[HTML]{dcf1e6}1.1\% &\cellcolor[HTML]{dcf1e6}0.7\% &\cellcolor[HTML]{b8e2ce}9.5\% &\cellcolor[HTML]{b8e2ce}8.5\% &\cellcolor[HTML]{57bb8a}22.2\% &\cellcolor[HTML]{85cdaa}15.8\% &\cellcolor[HTML]{b8e2ce}7.8\% &\cellcolor[HTML]{85cdaa}19.5\% &\cellcolor[HTML]{ffd6d6}-14.9\% &\cellcolor[HTML]{ffd6d6}-3.5\% &\cellcolor[HTML]{dcf1e6}1.4\%  \\
10000 &\cellcolor[HTML]{c9e9d9}2.4\% &\cellcolor[HTML]{b8e2ce}6.1\% &\cellcolor[HTML]{85cdaa}19.0\% &\cellcolor[HTML]{96d4b6}14.6\% &\cellcolor[HTML]{57bb8a}26.6\% &\cellcolor[HTML]{85cdaa}19.0\% &\cellcolor[HTML]{b8e2ce}5.7\% &\cellcolor[HTML]{57bb8a}22.8\% &\cellcolor[HTML]{ffd6d6}-14.7\% &\cellcolor[HTML]{ffd6d6}-5.0\% &\cellcolor[HTML]{ffd6d6}-0.0\%  \\
\bottomrule
\end{tabular} \label{tab:improvement-tlru-thre}
\end{table*}

\subsection{SLO Violation Reduction}
We now measure the \emph{count} of requests that a service-level objective, another objective similar to TEL---the \emph{amount} of latency beyond a threshold. Table \ref{tab:merged-improvement-tlru} reports the improvement on SLO attainment ratio (relative drop in requests whose TTFT exceeds 200 ms). With the latency threshold $\xi_{s}$ set to $200$ ms SLO, compared to LRU, T-LRU reduces between 8.8\% (small cache capacity) and 40.7\% (large cache capacity) of violations. 

\noindent\textbf{Sensitivity to latency threshold $\xi_{s}$.} 
When $\xi_{s}$ is much lower than the SLO ($\xi_{s} = 50$ or $100$ ms), the improvement is modest because both baselines already satisfy most requests; when $\xi_{s}$ is far higher ($\xi_{s} = 500$ ms), T-LRU focuses on larger tails and can allow for a few extra 200 ms violations.

These results echo the design goal: TEL minimization penalizes \emph{how much} a request overshoots $\xi_{s}$, yet the same trimming logic also cuts the \emph{number} of SLO violations whenever $\xi_{s}$ aligns or is slightly below the target latency budget.

\begin{table*}[!htbp]\centering
\caption{Relative improvement of T-LRU: \% reduction in requests with latency $>$ 200ms}
\scriptsize	
\begin{tabular}{rrrrrrrrrrr}\toprule
Capacity &\multicolumn{2}{c}{$\xi_s$ = 50ms} &\multicolumn{2}{c}{$\xi_s$ = 100ms} &\multicolumn{2}{c}{$\xi_s$ = 150ms} &\multicolumn{2}{c}{$\xi_s$ = 200ms} &\multicolumn{2}{c}{$\xi_s$ = 500ms} \\
 &LRU &Thre-LRU &LRU &Thre-LRU &LRU &Thre-LRU &LRU &Thre-LRU &LRU &Thre-LRU \\\cmidrule{1-11}
1000 & \cellcolor[HTML]{dcf1e6} 1.2\% & \cellcolor[HTML]{ffd6d6} -1.2\% & \cellcolor[HTML]{dcf1e6} 4.1\% & \cellcolor[HTML]{dcf1e6} 1.8\% & \cellcolor[HTML]{c9e9d9} 6.5\% & \cellcolor[HTML]{dcf1e6} 4.2\% & \cellcolor[HTML]{c9e9d9} 8.8\% & \cellcolor[HTML]{c9e9d9} 6.6\% & \cellcolor[HTML]{ffd6d6} -1.8\% & \cellcolor[HTML]{ffd6d6} -4.2\% \\
2000 & \cellcolor[HTML]{dcf1e6} 2.4\% & \cellcolor[HTML]{dcf1e6} 1.2\% & \cellcolor[HTML]{c9e9d9} 6.1\% & \cellcolor[HTML]{dcf1e6} 4.9\% & \cellcolor[HTML]{c9e9d9} 9.1\% & \cellcolor[HTML]{c9e9d9} 8.0\% & \cellcolor[HTML]{b8e2ce} 13.3\% & \cellcolor[HTML]{b8e2ce} 12.3\% & \cellcolor[HTML]{ffd6d6} -4.8\% & \cellcolor[HTML]{ffd6d6} -6.1\% \\
4000 & \cellcolor[HTML]{dcf1e6} 3.9\% & \cellcolor[HTML]{dcf1e6} 1.3\% & \cellcolor[HTML]{c9e9d9} 7.1\% & \cellcolor[HTML]{dcf1e6} 4.7\% & \cellcolor[HTML]{b8e2ce} 14.3\% & \cellcolor[HTML]{b8e2ce} 12.0\% & \cellcolor[HTML]{96d4b6} 22.1\% & \cellcolor[HTML]{b8e2ce} 20.0\% & \cellcolor[HTML]{ffd6d6} -12.3\% & \cellcolor[HTML]{ffd6d6} -15.3\% \\
6000 & \cellcolor[HTML]{dcf1e6} 2.0\% & \cellcolor[HTML]{dcf1e6} 0.7\% & \cellcolor[HTML]{b8e2ce} 12.2\% & \cellcolor[HTML]{b8e2ce} 11.0\% & \cellcolor[HTML]{96d4b6} 20.9\% & \cellcolor[HTML]{b8e2ce} 19.9\% & \cellcolor[HTML]{85cdaa} 30.4\% & \cellcolor[HTML]{96d4b6} 29.5\% & \cellcolor[HTML]{ffd6d6} -12.2\% & \cellcolor[HTML]{ffd6d6} -13.7\% \\
8000 & \cellcolor[HTML]{dcf1e6} 4.3\% & \cellcolor[HTML]{dcf1e6} 2.2\% & \cellcolor[HTML]{b8e2ce} 16.4\% & \cellcolor[HTML]{b8e2ce} 14.6\% & \cellcolor[HTML]{96d4b6} 29.3\% & \cellcolor[HTML]{96d4b6} 27.7\% & \cellcolor[HTML]{85cdaa} 33.6\% & \cellcolor[HTML]{85cdaa} 32.1\% & \cellcolor[HTML]{ffd6d6} -19.3\% & \cellcolor[HTML]{ffd6d6} -21.9\% \\
10000 & \cellcolor[HTML]{c9e9d9} 7.4\% & \cellcolor[HTML]{dcf1e6} 4.6\% & \cellcolor[HTML]{96d4b6} 25.9\% & \cellcolor[HTML]{96d4b6} 23.7\% & \cellcolor[HTML]{85cdaa} 31.9\% & \cellcolor[HTML]{96d4b6} 29.8\% & \cellcolor[HTML]{57bb8a} 40.7\% & \cellcolor[HTML]{85cdaa} 38.9\% & \cellcolor[HTML]{ffd6d6} -19.3\% & \cellcolor[HTML]{ffd6d6} -22.9\% \\
\bottomrule
\end{tabular} \label{tab:merged-improvement-tlru}
\end{table*}

\subsection{Offline Baselines and T-LRU Variants with Future Knowledge.}
Recent work on queueing systems has shown that machine learning predictions, such as predicted service times or job characteristics, can significantly improve scheduling and resource allocation decisions~\citep{mitzenmacher2025queueing}. In the context of LLM inference systems, various types of predictions (e.g., conversation continuation, prompt lengths) could also potentially enhance caching performance. This motivates a natural question: \textit{which predictions are most valuable for caching, and how much improvement can each type unlock?} To answer this, we design T-LRU variants with different levels of future knowledge, creating a predictability spectrum that quantifies the marginal benefit of each prediction type.

We evaluate three policies with increasing levels of future knowledge:
\begin{itemize}
\item \textbf{T-LRU (baseline):} Uses the empirical average prompt length to predict future requests. No knowledge of conversation termination or actual prompt lengths.
\item \textbf{End-Aware T-LRU:} Knows whether each conversation will continue (return) or terminate, but does not know future prompt lengths. This variant evicts all blocks from terminating conversations and follows standard T-LRU for continuing ones.
\item \textbf{Length-Aware T-LRU:} Knows both conversation continuations AND the exact length of the next user prompt. This variant uses exact prompt lengths for TEL-safe trimming and evicts all caches when conversations end.
\end{itemize}
Additionally, we include {Tail-Optimized Belady}, the hindsight-optimal policy for TEL that knows the entire future arrival sequence. This serves as an upper bound on achievable performance.

Figure~\ref{fig:result-latency-Wildchat} reports the latency distributions (median, P90, P95, P99) under different caching policies across varying cache capacities on the WildChat dataset. 

We highlight two observations from Figure \ref{fig:result-latency-Wildchat}. First, tail improvement is achieved with a modest cost to the median latency. T-LRU (blue squares) consistently beats LRU and Threshold-LRU at the tail (around 300 ms), but its median latency is slightly higher (increased from 10ms to 40ms). The trade-off is expected and in fact intended. 
Median latency degradation is minimal and users will likely not notice: for context, the duration of a blink is on average 100–400 milliseconds according to the Harvard Database of Useful Biological Numbers~\citep{bionumbers100706}. T-LRU deliberately trades off slightly higher median latency (a few milliseconds) to achieve larger tail latency reductions (tens to hundreds of milliseconds), which is more favorable for user experience.

Second, a single-bit forecast ``will this conversation continue?" is a remarkably powerful signal. End-Aware T-LRU performs much better than T-LRU, while Length-Aware T-LRU gains only a small additional edge from knowing the exact prompt length.
In practice, predicting whether a single conversation will continue is much easier than forecasting exact prompt sizes, and vastly easier than predicting the full arrival sequences required by Tail-Optimized Belady. 
Existing works like~\citep{jin2023s} propose models to predict the length of model response with up to 98.61\% prediction accuracy, underscoring the practical viability of deploying End-Aware policies.

Note that Tail-Optimized Belady is the optimal policy for our TEL objective for the $\xi_{s}$ chosen, thus it is not necessarily the optimal policy for tail latency at different levels. On the other hand, solving for the optimal policy for tail latency is computationally hard.

\begin{figure}[ht]
    \centering
    \begin{subfigure}[t]{\textwidth} 
        \centering \includegraphics[width=\linewidth]{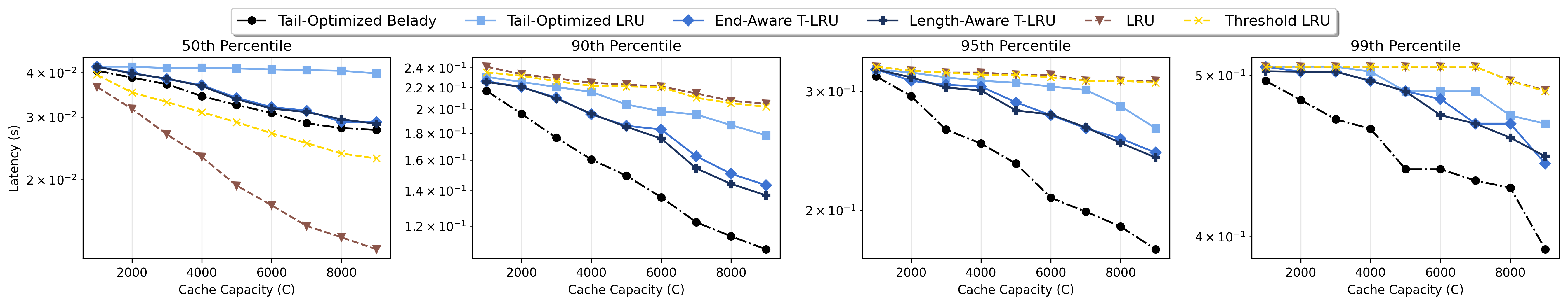}
    \end{subfigure} \hfill
    \begin{subfigure}[t]{\textwidth} 
        \centering \includegraphics[width=\linewidth]{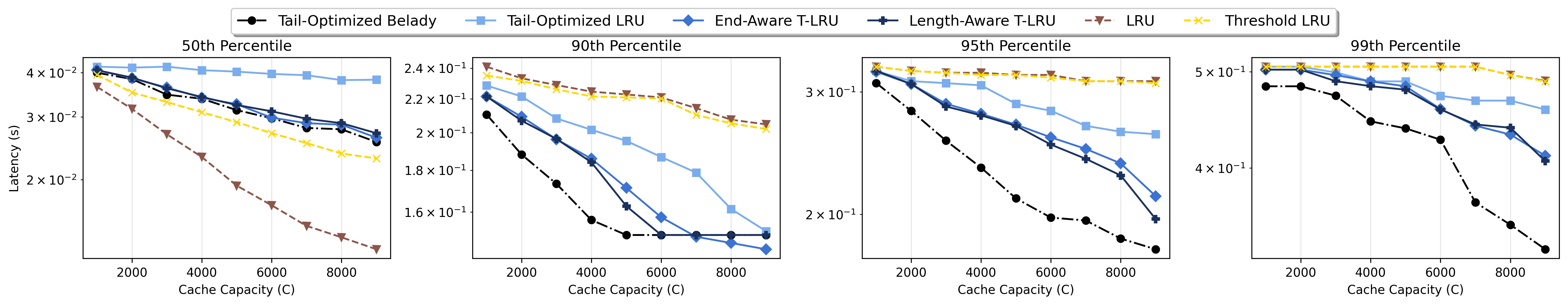}
    \end{subfigure}    
    \hfill
    \begin{subfigure}[t]{\textwidth} 
        \centering \includegraphics[width=\linewidth]{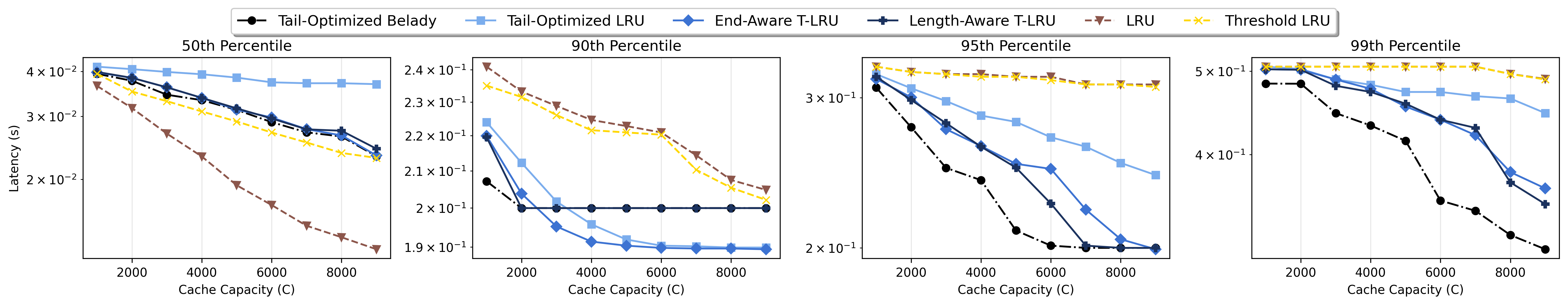}
    \end{subfigure}    
    \caption{Latency results for various settings (threshold latency $\xi_{s}$ = 100, 200, 300 ms) from top to bottom panels.}
    \label{fig:result-latency-Wildchat}
\end{figure}

\section{Future Directions and Conclusions}\label{sec:conclusion}

We propose Tail-Optimized LRU (T-LRU), a simple yet effective modification to the Least Recently Used caching policy that significantly improves tail latency in multi-turn conversational LLM serving. By adaptively caching just enough tokens to keep each conversation's next request below a latency threshold—rather than making binary all-or-nothing caching decisions, T-LRU achieves substantial tail latency reductions (e.g., 20--30\% improvement in P95 TTFT) with modest impact on median performance.

T-LRU makes a deliberate design choice: sacrifice tens of milliseconds at the median to eliminate hundreds of milliseconds at the tail. This trade-off aligns well with strict SLO requirements in production systems, though exploring multi-objective policies that balance tail and median latency differently remains an interesting direction for future work.

Our focus is on a single storage layer, but real systems increasingly adopt hierarchical memory architectures. Extending T-LRU to multi-tier KV caching systems—where evicted blocks migrate to slower storage (CPU memory, SSD) rather than being discarded—could unlock further performance gains. Recent work~\citep{qin2025mooncake} has begun exploring such architectures, and understanding optimal promotion/demotion policies across cache tiers represents an important open problem.
Another compelling direction is joint optimization of caching and load balancing. Recent work~\citep{srivatsa2024preble} has initiated exploration of how caching decisions interact with request routing in distributed LLM serving. Analyzing these problems through a queueing-theoretic lens and considering both cache locality and load distribution could systematically characterize fundamental trade-offs and guide practical system design.

\medskip

{
\small
\bibliographystyle{plainnat}
\bibliography{reference}

\begin{thebibliography}{43}
\providecommand{\natexlab}[1]{#1}
\providecommand{\url}[1]{\texttt{#1}}
\expandafter\ifx\csname urlstyle\endcsname\relax
  \providecommand{\doi}[1]{doi: #1}\else
  \providecommand{\doi}{doi: \begingroup \urlstyle{rm}\Url}\fi

\bibitem[Agrawal et~al.(2024)Agrawal, Kedia, Panwar, Mohan, Kwatra, Gulavani, Tumanov, and Ramjee]{agrawal2024taming}
Amey Agrawal, Nitin Kedia, Ashish Panwar, Jayashree Mohan, Nipun Kwatra, Bhargav Gulavani, Alexey Tumanov, and Ramachandran Ramjee.
\newblock Taming $\{$Throughput-Latency$\}$ tradeoff in $\{$LLM$\}$ inference with $\{$Sarathi-Serve$\}$.
\newblock In \emph{18th USENIX Symposium on Operating Systems Design and Implementation (OSDI 24)}, pages 117--134, 2024.

\bibitem[Anthropic(2024)]{Anthropic2024PromptCaching}
Anthropic.
\newblock Prompt caching, 2024.
\newblock URL \url{https://docs.anthropic.com/en/docs/build-with-claude/prompt-caching}.

\bibitem[B{\"a}uerle and Ott(2011)]{bauerle2011markov}
Nicole B{\"a}uerle and Jonathan Ott.
\newblock Markov decision processes with average-value-at-risk criteria.
\newblock \emph{Mathematical Methods of Operations Research}, 74:\penalty0 361--379, 2011.

\bibitem[Belady(1966)]{belady1966study}
Laszlo~A. Belady.
\newblock A study of replacement algorithms for a virtual-storage computer.
\newblock \emph{IBM Systems journal}, 5\penalty0 (2):\penalty0 78--101, 1966.

\bibitem[Berger et~al.(2017)Berger, Sitaraman, and Harchol-Balter]{berger2017adaptsize}
Daniel~S Berger, Ramesh~K Sitaraman, and Mor Harchol-Balter.
\newblock $\{$AdaptSize$\}$: Orchestrating the hot object memory cache in a content delivery network.
\newblock In \emph{14th USENIX Symposium on Networked Systems Design and Implementation (NSDI 17)}, pages 483--498, 2017.

\bibitem[Berger et~al.(2018)Berger, Berg, Zhu, Sen, and Harchol-Balter]{berger2018robinhood}
Daniel~S Berger, Benjamin Berg, Timothy Zhu, Siddhartha Sen, and Mor Harchol-Balter.
\newblock $\{$RobinHood$\}$: Tail latency aware caching--dynamic reallocation from $\{$Cache-Rich$\}$ to $\{$Cache-Poor$\}$.
\newblock In \emph{13th USENIX Symposium on Operating Systems Design and Implementation (OSDI 18)}, pages 195--212, 2018.

\bibitem[{BioNumbers Database}(2024)]{bionumbers100706}
{BioNumbers Database}.
\newblock Average duration of a single eye blink.
\newblock BioNumbers Database, 2024.
\newblock URL \url{https://bionumbers.hms.harvard.edu/bionumber.aspx?id=100706}.
\newblock BNID 100706.

\bibitem[Borodin and El-Yaniv(2005)]{borodin2005online}
Allan Borodin and Ran El-Yaniv.
\newblock \emph{Online computation and competitive analysis}.
\newblock cambridge university press, 2005.

\bibitem[Borodin et~al.(1995)Borodin, Irani, Raghavan, and Schieber]{borodin1995competitive}
Allan Borodin, Sandy Irani, Prabhakar Raghavan, and Baruch Schieber.
\newblock Competitive paging with locality of reference.
\newblock \emph{Journal of Computer and System Sciences}, 50\penalty0 (2):\penalty0 244--258, 1995.

\bibitem[Che et~al.(2002)Che, Tung, and Wang]{che2002hierarchical}
Hao Che, Ye~Tung, and Zhijun Wang.
\newblock Hierarchical web caching systems: Modeling, design and experimental results.
\newblock \emph{IEEE journal on Selected Areas in Communications}, 20\penalty0 (7):\penalty0 1305--1314, 2002.

\bibitem[Chen et~al.(2023)Chen, Sharma, Khan, Liu, Chang, Akella, Shakkottai, and Sitaraman]{chen2023darwin}
Jiayi Chen, Nihal Sharma, Tarannum Khan, Shu Liu, Brian Chang, Aditya Akella, Sanjay Shakkottai, and Ramesh~K Sitaraman.
\newblock Darwin: Flexible learning-based cdn caching.
\newblock In \emph{Proceedings of the ACM SIGCOMM 2023 Conference}, pages 981--999, 2023.

\bibitem[Chow et~al.(2015)Chow, Tamar, Mannor, and Pavone]{chow2015risk}
Yinlam Chow, Aviv Tamar, Shie Mannor, and Marco Pavone.
\newblock Risk-sensitive and robust decision-making: a cvar optimization approach.
\newblock \emph{Advances in neural information processing systems}, 28, 2015.

\bibitem[Chrobak and Noga(1999)]{chrobak1999lru}
Marek Chrobak and John Noga.
\newblock Lru is better than fifo.
\newblock \emph{Algorithmica}, 23:\penalty0 180--185, 1999.

\bibitem[Coffman and Denning(1973)]{coffman1973operating}
Edward~Grady Coffman and Peter~J Denning.
\newblock \emph{Operating systems theory}, volume 973.
\newblock prentice-Hall Englewood Cliffs, NJ, 1973.

\bibitem[Contributors(2025)]{sharegpt}
ShareGPT Contributors.
\newblock Sharegpt: A dataset of multi-turn chat interactions with large language models.
\newblock \url{https://huggingface.co/datasets/RyokoAI/ShareGPT52K}, 2025.

\bibitem[Dan and Towsley(1990)]{dan1990approximate}
Asit Dan and Don Towsley.
\newblock An approximate analysis of the lru and fifo buffer replacement schemes.
\newblock In \emph{Proceedings of the 1990 ACM SIGMETRICS conference on Measurement and modeling of computer systems}, pages 143--152, 1990.

\bibitem[Fiat et~al.(1991)Fiat, Karp, Luby, McGeoch, Sleator, and Young]{fiat1991competitive}
Amos Fiat, Richard~M Karp, Michael Luby, Lyle~A McGeoch, Daniel~D Sleator, and Neal~E Young.
\newblock Competitive paging algorithms.
\newblock \emph{Journal of Algorithms}, 12\penalty0 (4):\penalty0 685--699, 1991.

\bibitem[Gim et~al.(2024)Gim, Chen, Lee, Sarda, Khandelwal, and Zhong]{gim2024prompt}
In~Gim, Guojun Chen, Seung-seob Lee, Nikhil Sarda, Anurag Khandelwal, and Lin Zhong.
\newblock Prompt cache: Modular attention reuse for low-latency inference.
\newblock \emph{Proceedings of Machine Learning and Systems}, 6:\penalty0 325--338, 2024.

\bibitem[Horton et~al.(2024)Horton, Cao, Sun, Jin, Mehta, Rastegari, and Nabi]{horton2024kv}
Maxwell Horton, Qingqing Cao, Chenfan Sun, Yanzi Jin, Sachin Mehta, Mohammad Rastegari, and Moin Nabi.
\newblock Kv prediction for improved time to first token.
\newblock \emph{arXiv preprint arXiv:2410.08391}, 2024.

\bibitem[Jain(1990)]{jain1990characteristics}
Raj Jain.
\newblock Characteristics of destination address locality in computer networks: A comparison of caching schemes.
\newblock \emph{Computer networks and ISDN systems}, 18\penalty0 (4):\penalty0 243--254, 1990.

\bibitem[Jin et~al.(2024)Jin, Zhang, Jiang, Liu, Liu, Liu, and Jin]{jin2024ragcache}
Chao Jin, Zili Zhang, Xuanlin Jiang, Fangyue Liu, Xin Liu, Xuanzhe Liu, and Xin Jin.
\newblock Ragcache: Efficient knowledge caching for retrieval-augmented generation.
\newblock \emph{arXiv preprint arXiv:2404.12457}, 2024.

\bibitem[Jin et~al.(2023)Jin, Wu, Brooks, and Wei]{jin2023s}
Yunho Jin, Chun-Feng Wu, David Brooks, and Gu-Yeon Wei.
\newblock $s^3$: Increasing gpu utilization during generative inference for higher throughput.
\newblock \emph{Advances in Neural Information Processing Systems}, 36:\penalty0 18015--18027, 2023.

\bibitem[Kamath et~al.(2025)Kamath, Prabhu, Mohan, Peter, Ramjee, and Panwar]{kamath2025pod}
Aditya~K Kamath, Ramya Prabhu, Jayashree Mohan, Simon Peter, Ramachandran Ramjee, and Ashish Panwar.
\newblock Pod-attention: Unlocking full prefill-decode overlap for faster llm inference.
\newblock In \emph{Proceedings of the 30th ACM International Conference on Architectural Support for Programming Languages and Operating Systems, Volume 2}, pages 897--912, 2025.

\bibitem[King(1972)]{king1972analysis}
WC~King.
\newblock Analysis of paging algorithms.
\newblock In \emph{Proc. IFIP 1971 Congress, Ljubljana}, pages 485--490. North-Holland, 1972.

\bibitem[Kwon et~al.(2023)Kwon, Li, Zhuang, Sheng, Zheng, Yu, Gonzalez, Zhang, and Stoica]{kwon2023efficient}
Woosuk Kwon, Zhuohan Li, Siyuan Zhuang, Ying Sheng, Lianmin Zheng, Cody~Hao Yu, Joseph Gonzalez, Hao Zhang, and Ion Stoica.
\newblock Efficient memory management for large language model serving with pagedattention.
\newblock In \emph{Proceedings of the 29th Symposium on Operating Systems Principles}, pages 611--626, 2023.

\bibitem[Leonardi and Torrisi(2015)]{leonardi2015least}
Emilio Leonardi and Giovanni~Luca Torrisi.
\newblock Least recently used caches under the shot noise model.
\newblock In \emph{2015 IEEE Conference on Computer Communications (INFOCOM)}, pages 2281--2289. IEEE, 2015.

\bibitem[Lykouris and Vassilvitskii(2021)]{lykouris2021competitive}
Thodoris Lykouris and Sergei Vassilvitskii.
\newblock Competitive caching with machine learned advice.
\newblock \emph{Journal of the ACM (JACM)}, 68\penalty0 (4):\penalty0 1--25, 2021.

\bibitem[Mhaisen et~al.(2022)Mhaisen, Sinha, Paschos, and Iosifidis]{mhaisen2022optimistic}
Naram Mhaisen, Abhishek Sinha, Georgios Paschos, and George Iosifidis.
\newblock Optimistic no-regret algorithms for discrete caching.
\newblock \emph{Proceedings of the ACM on Measurement and Analysis of Computing Systems}, 6\penalty0 (3):\penalty0 1--28, 2022.

\bibitem[Mitzenmacher and Shahout(2025)]{mitzenmacher2025queueing}
Michael Mitzenmacher and Rana Shahout.
\newblock Queueing, predictions, and large language models: Challenges and open problems.
\newblock \emph{Stochastic Systems}, 15\penalty0 (3):\penalty0 195--219, 2025.

\bibitem[{OpenAI}(2024)]{OpenAI2024PromptCaching}
{OpenAI}.
\newblock Prompt caching guide, 2024.
\newblock URL \url{https://platform.openai.com/docs/guides/prompt-caching}.
\newblock Developer documentation.

\bibitem[{OpenAI Newsroom}(2024)]{OpenAI2024Stats}
{OpenAI Newsroom}.
\newblock 300m weekly chatgpt users and 1b daily messages, December 2024.
\newblock URL \url{https://x.com/OpenAINewsroom/status/1864373399218475440}.
\newblock Tweet.

\bibitem[Phalke and Gopinath(1995)]{phalke1995inter}
Vidyadhar Phalke and Bhaskarpillai Gopinath.
\newblock An inter-reference gap model for temporal locality in program behavior.
\newblock \emph{ACM SIGMETRICS Performance Evaluation Review}, 23\penalty0 (1):\penalty0 291--300, 1995.

\bibitem[Qin et~al.(2025)Qin, Li, He, Cui, Ren, Zhang, Wu, Zheng, and Xu]{qin2025mooncake}
Ruoyu Qin, Zheming Li, Weiran He, Jialei Cui, Feng Ren, Mingxing Zhang, Yongwei Wu, Weimin Zheng, and Xinran Xu.
\newblock Mooncake: Trading more storage for less computation—a kvcache-centric architecture for serving llm chatbot.
\newblock In \emph{23rd USENIX Conference on File and Storage Technologies (FAST 25)}, pages 155--170. USENIX Association, 2025.

\bibitem[Sleator and Tarjan(1985)]{sleator1985amortized}
Daniel~D Sleator and Robert~E Tarjan.
\newblock Amortized efficiency of list update and paging rules.
\newblock \emph{Communications of the ACM}, 28\penalty0 (2):\penalty0 202--208, 1985.

\bibitem[Srivatsa et~al.(2024)Srivatsa, He, Abhyankar, Li, and Zhang]{srivatsa2024preble}
Vikranth Srivatsa, Zijian He, Reyna Abhyankar, Dongming Li, and Yiying Zhang.
\newblock Preble: Efficient distributed prompt scheduling for llm serving.
\newblock 2024.

\bibitem[{vLLM Team}(2025)]{vLLM2025PrefixCaching}
{vLLM Team}.
\newblock Automatic prefix caching, 2025.
\newblock URL \url{https://docs.vllm.ai/en/stable/automatic_prefix_caching/details.html}.
\newblock Project documentation.

\bibitem[Ye et~al.(2025)Ye, Chen, Lai, Lin, Zhang, Wang, Chen, Kasikci, Grover, Krishnamurthy, et~al.]{ye2025flashinfer}
Zihao Ye, Lequn Chen, Ruihang Lai, Wuwei Lin, Yineng Zhang, Stephanie Wang, Tianqi Chen, Baris Kasikci, Vinod Grover, Arvind Krishnamurthy, et~al.
\newblock Flashinfer: Efficient and customizable attention engine for llm inference serving.
\newblock \emph{arXiv preprint arXiv:2501.01005}, 2025.

\bibitem[Zhao et~al.(2024)Zhao, Ren, Hessel, Cardie, Choi, and Deng]{zhao2024wildchat}
Wenting Zhao, Xiang Ren, Jack Hessel, Claire Cardie, Yejin Choi, and Yuntian Deng.
\newblock Wildchat: 1m chatgpt interaction logs in the wild.
\newblock \emph{arXiv preprint arXiv:2405.01470}, 2024.

\bibitem[Zheng et~al.(2024)Zheng, Yin, Xie, Sun, Huang, Yu, Cao, Kozyrakis, Stoica, Gonzalez, et~al.]{zheng2024sglang}
Lianmin Zheng, Liangsheng Yin, Zhiqiang Xie, Chuyue~Livia Sun, Jeff Huang, Cody~Hao Yu, Shiyi Cao, Christos Kozyrakis, Ion Stoica, Joseph~E Gonzalez, et~al.
\newblock Sglang: Efficient execution of structured language model programs.
\newblock \emph{Advances in Neural Information Processing Systems}, 37:\penalty0 62557--62583, 2024.

\bibitem[Zhong et~al.(2024)Zhong, Liu, Chen, Hu, Zhu, Liu, Jin, and Zhang]{zhong2024distserve}
Yinmin Zhong, Shengyu Liu, Junda Chen, Jianbo Hu, Yibo Zhu, Xuanzhe Liu, Xin Jin, and Hao Zhang.
\newblock $\{$DistServe$\}$: Disaggregating prefill and decoding for goodput-optimized large language model serving.
\newblock In \emph{18th USENIX Symposium on Operating Systems Design and Implementation (OSDI 24)}, pages 193--210, 2024.

\bibitem[Zhu et~al.(2023)Zhu, Sheng, Zheng, Barrett, Jordan, and Jiao]{zhu2023optimal}
Banghua Zhu, Ying Sheng, Lianmin Zheng, Clark Barrett, Michael~I Jordan, and Jiantao Jiao.
\newblock On optimal caching and model multiplexing for large model inference.
\newblock \emph{arXiv preprint arXiv:2306.02003}, 2023.

\bibitem[Zhu et~al.(2024)Zhu, Zhao, Zhao, Zuo, Gu, Xie, Gao, Xu, Tang, Ye, et~al.]{zhu2024nanoflow}
Kan Zhu, Yilong Zhao, Liangyu Zhao, Gefei Zuo, Yile Gu, Dedong Xie, Yufei Gao, Qinyu Xu, Tian Tang, Zihao Ye, et~al.
\newblock Nanoflow: Towards optimal large language model serving throughput.
\newblock \emph{arXiv preprint arXiv:2408.12757}, 2024.

\bibitem[Zhu et~al.(2016)Zhu, Berger, and Harchol-Balter]{zhu2016snc}
Timothy Zhu, Daniel~S Berger, and Mor Harchol-Balter.
\newblock Snc-meister: Admitting more tenants with tail latency slos.
\newblock In \emph{Proceedings of the Seventh ACM Symposium on Cloud Computing}, pages 374--387, 2016.

\end{thebibliography}
}

\newpage
\appendix


\section*{Appendix}
\section{Proof of Theorem \ref{thm: modified baledy is optimal} Hindsight Optimal Policy for TEL}\label{apx: hindsight optimal policy}
\begin{proof}
The proof has two steps: 1) show that optimizing the original TEL problem \eqref{eq: opt for min TEL} is equivalent to a new problem \eqref{eq: opt for belady} of maximizing the total number of reused cached blocks, subject to an additional ``TEL-safe" capping constraint; 2) show that this new problem is a classic offline caching problem whose optimal solution is to cap the KV cache size using TEL-safe budget and then evict using furthest-in-future policy.

\paragraph{Step one: Equivalence of Optimization Problems}
Fix a feasible $\bm{x}$ to \eqref{eq: opt for min TEL}, the optimal $\bm{u}$ is given by
\[u_{i,t} = \left(\sum_{j=1}^{t} q_{i,j} + \sum_{j=1}^{t-1} a_{i,j} - x_{i,t} - \xi\right)^+,\]
as otherwise we have $u_{i,t}$ infeasible or can be improved. Thus we can focus on cache decisions $\bm{x}$. 

We claim the optimal solution to the optimization problem \eqref{eq: opt for min TEL} must satisfy the TEL-safe budget $x_{i,t} \leq \left(\sum_{j=1}^{t} q_{i,j} + \sum_{j=1}^{t-1} a_{i,j} -\xi\right)^+$ for every $i \in [N],t \in \mathcal{T}_i$. 
Suppose not, i.e., the optimal $\bm{x}'$ to \eqref{eq: opt for min TEL} satisfies $x_{i,t}' > \left(\sum_{j=1}^{t} q_{i,j} + \sum_{j=1}^{t-1} a_{i,j} -\xi\right)^+$ for some $i \in [N],t \in \mathcal{T}_i$. Then we have 
\[\sum_{j=1}^{t} q_{i,j} + \sum_{j=1}^{t-1} a_{i,j} - x'_{i,t}-\xi <0, u_{i,t}' = 0.\]
In this case, setting $x_{i,t} = \left(\sum_{j=1}^{t} q_{i,j} + \sum_{j=1}^{t-1} a_{i,j} -\xi\right)^+$ will not affect the value of $u_{i,t}'$, while releasing cache capacity that can be directed to other conversations, which contradicts the optimality of $\bm{x}'$.

As the optimal solution must respect the TEL-safe budget, we can simplify the objective: 
\begin{align*}
    \sum_{i\in [N]}\sum_{t\in \mathcal{T}_i} u_{i,t} & = \sum_{i \in [N]}\sum_{t\in \mathcal{T}_i}\left(\sum_{j=1}^{t} q_{i,j} + \sum_{j=1}^{t-1} a_{i,j} - x_{i,t}-\xi\right)^+\\
    & =  \sum_{i \in [N]}\sum_{t\in \mathcal{T}_i} \mathbbm{1}\left\{\sum_{j=1}^{t} q_{i,j} + \sum_{j=1}^{t-1} a_{i,j} -\xi \geq 0 \right\} \cdot \left(\sum_{j=1}^{t} q_{i,j} + \sum_{j=1}^{t-1} a_{i,j} - x_{i,t}- \xi\right) \\
    & \quad \quad + \sum_{i \in [N]}\sum_{t\in \mathcal{T}_i} \mathbbm{1}\left\{\sum_{j=1}^{t} q_{i,j} + \sum_{j=1}^{t-1} a_{i,j} -\xi < 0 \right\}\cdot 0\\
    & = \sum_{i \in [N]}\sum_{t\in \mathcal{T}_i} \mathbbm{1}\left\{\sum_{j=1}^{t} q_{i,j} + \sum_{j=1}^{t-1} a_{i,j} -\xi \geq 0 \right\}  \left(\sum_{j=1}^{t} q_{i,j} + \sum_{j=1}^{t-1} a_{i,j} - \xi\right) -\sum_{i \in [N]} \sum_{t\in \mathcal{T}_i} x_{i,t}
\end{align*}
where the last equality holds as $x_{i,t} =0$ for the requests with $\sum_{j=1}^{t} q_{i,j} + \sum_{j=1}^{t-1} a_{i,j} -\xi < 0$.
Thus minimizing $\sum_{i\in [N]}\sum_{t\in \mathcal{T}_i} u_{i,t}$ is equivalent to maximizing $\sum_{i \in [N]} \sum_{t\in \mathcal{T}_i} x_{i,t}$ as the first term of the right-hand-side of the last equality above is a constant.

Therefore, we can rewrite \eqref{eq: opt for min TEL} as follows:
\begin{align}
    \max_{x_{i,t} \in \mathbb{N}} \; & \sum_{i\in [N]}\sum_{t\in \mathcal{T}_i} x_{i,t} \label{eq: opt for belady}\\
    \text{s.t. } \; & \eqref{cons: capacity}, \eqref{cons: optional caching}, \eqref{cons: cache on arrival} \notag\\
    & x_{i,t} \leq \left(\sum_{j=1}^{t}  q_{i,j} + \sum_{j=1}^{t-1} a_{i,j} -\xi\right)^+, \forall i \in [N], t\in \mathcal{T}_i; \label{cons: cache just enough}
\end{align}

\paragraph{Step two.} 

We can interpret each block as a unit-size item that, once arrived, is requested again at every subsequent turn of the same conversation (prefix reuse). At time $t \in \mathcal{T}_i$, the number of cache hits for that request of conversation $i$ equals exactly $x_{i,t}$ Hence the objective of \eqref{eq: opt for belady} is the total number of cache hits over the horizon.

Because the cache content may change only at request times and all items have unit size, \eqref{eq: opt for belady} is a paging instance with unit pages and clairvoyant knowledge, except that each request carries a per-request cap on the number of allowable hits, which we can enforce by immediately discarding any excess above upon the turn’s completion.

In paging with unit pages, the offline optimal policy that maximizes total hits (equivalently minimizes misses) is Bélády’s furthest-in-future rule: whenever eviction is needed, evict the item whose next request is farthest in the future. Here, all blocks of conversation $i$ share the same next request time---the next arrival of conversation $i$---so the rule specializes to: evict from the conversation whose next arrival is farthest in the future. 

\end{proof}

\section{Proof of Theorem \ref{thm: optimality of tail LRU rand}}\label{apx: proof of ETLRU}
\begin{proof}
Given system state $\bm{\lambda}, \bm{L}, \bm{X}$, let $\theta$ denote the index of the conversation that is the $k^{th}$ arrival with user prompt length $Q$ and model response length $A$. The finite-horizon value-to-go function is
\begin{align*} 
    & V_k(\bm{\lambda}, \bm{L}, \bm{X}, \theta, Q, A) = \underbrace{(L_\theta + Q -X_\theta - \xi)^+}_{\textsf{number of uncached blocks above threshold}} \\
    & \quad +  \min_{\bm{X}' \in \mathcal{X}(\bm{X},\theta, L_\theta + Q+A)}\mathbb{E}_{ \tau, \theta', Q', A'}\left[V_{k+1}(\underbrace{\Phi(\bm{\lambda}, \theta, \tau)}_{\text{belief arrival state transition}}, \underbrace{\Psi(\bm{L}, \theta, Q+A)}_{\text{conversation length transition}},\bm{X}', \theta', Q', A')\right]
\end{align*}
with $V_{M+1}(\cdot)=0$, where the feasible caching decision space is
        \[\mathcal{X}(\bm{X}, \theta, L) = \{\bm{Y} \in \mathbb{N}^{\textsf{dim}(\bX, \theta)}: \sum_{i} Y_i \leq C, 0\leq Y_\theta \leq L, 0 \leq Y_i \leq X_i, i\neq \theta\}\]
with $\textsf{dim}(\bX, \theta) = \textsf{dim}(\bX) + \mathbbm{1}\{\theta > \textsf{dim}(\bX)\}$, here $\textsf{dim}(\bX)$ denotes the dimension of vector $\bX$, and the dimension expands when a new conversation arrives;  $\Phi(\blam, \theta, \tau)$ update the belief turn rate of conversation $\theta$ to $\bar{\lambda}_i$, then discount belief turn rates of all conversations by $\exp(-\mu \tau)$; $\Psi(\bL, \theta, Q+A)$ updates the conversation length vector. Specifically, we increase the dimension of $\bL$ if necessary (i.e., when $\theta$ represents a new conversation), and add $Q + A$ to its $\theta^{th}$ entry.  

Let $\theta$ denote the index of the conversation that is the $k^{th}$ arrival, with user prompt length $Q$ and model response length $A$. Define the belief arrival rate vector as $\blam$, the conversation length vector as $\bL$, and the cached token length vector as $\bX$, the cost-to-go function is given by:
\begin{align*} 
    & V_k(\bm{\lambda}, \bm{L}, \bm{X}, \theta, Q, A) = {(L_\theta + Q -X_\theta - \xi)^+} \\
    & \quad +  \min_{\bm{X}' \in \mathcal{X}(\bm{X},\theta, L_\theta + Q+A)}\mathbb{E}_{ \tau, \theta', Q', A'}\left[V_{k+1}({\Phi(\bm{\lambda}, \theta, \tau)}, {\Psi(\bm{L}, \theta, Q+A)},\bm{X}', \theta', Q', A')\right]
\end{align*}
with $V_{M+1}(\cdot)=0$. 
Let's rewrite $\Phi(\bm{\lambda}, \theta, \tau) = \Phi(\Gamma(\bm{\lambda}, \theta), \tau)$ with 
\begin{itemize}
    \item $\Gamma(\blam, \theta)$ updates the return rate vector upon the arrival of conversation $\theta$. Specifically, this operator changes the belief arrival rate of conversation $\theta$ to $\bar{\lambda}_i$.
    \item $\Phi(\blam, \tau)$ discount all return rates by $\exp(-\mu \tau)$. 
\end{itemize}

The proof is by using induction and argue that if we choose a different caching state than $\bm{X}^{\textsf{TLRU}}$, the cost will be higher.
At last arrival $M$, $V_M(\blam, \bL, \bm{X}, \theta, Q_\theta, A_\theta) = (L_\theta + Q_\theta - X_\theta - \xi)^+$ and any caching policy is optimal. 

Suppose this holds for the $k^{th}+1$ arrival. 
We proceed to show that the result holds for the $k^{th}$ arrival.
To simplify the notation, we define
\[J_k(\blam, \bL, \bX) = \mathbb{E}_{\tau, \theta, Q_\theta, A_\theta}[V_k(\Phi(\blam, \tau), \bL, \bX, \theta, Q_\theta, A_\theta)], \tilde{\blam} = \Gamma(\blam, \theta), \tilde{\bL} = \Psi(\bL, \theta, Q_\theta + A_\theta),\]
Then we need to prove
\[J_{k+1}(\tilde{\blam}, \tilde{\bL}, \bX^\textsf{ETLRU}) \leq J_{k+1}(\tilde{\blam}, \tilde{\bL}, \bX')\]

To see this, by definition of state transition, suppose the inter-arrival time is $\tau$, the discounted arrival rates are 
\[\tilde{\lambda}_i\cdot \exp(-\mu \tau) \text{ with } \tilde{\lambda}_\theta = \bar{\lambda}_{\theta}.\] 
From these expressions, we can conclude that regardless of value of $\tau$, the return rates at next arrival maintain the same relative ordering as in $\bm{\lambda}$ for conversations. Note that due to heterogeneous turn rates across conversations, conversation $\theta$ that arrived at the $k^{th}$ arrival may not have the highest turn rate. 
Without loss of generality, let's assume $\tau = 0$ and let $\tilde{p}_j = \tilde{\lambda}_j /(\lambda_{\textsf{conv}} + \sum_i \tilde{\lambda}_i)$ denote the probability that conversation $j$ returns at the next arrival, and $\tilde{p}_{\textsf{dim}(\bX(1))+1} =\lambda_{\textsf{conv}} /(\lambda_{\textsf{conv}} + \sum_i \tilde{\lambda}_i)$ denote the probability that a new conversation starts at the next arrival.

We proceed to show this holds for any number of tokens evicted by treating each token eviction separately. 
Among all conversations that have arrived so far and have at least one cached token, list their indices as $i(1), i(2), \ldots$ in ascending order of the ranking criterion score
\[\tilde{\lambda}_i \mathbb{P}(\tilde{L}_i + Q_i -\xi \geq X_i),\] 
Insert conversation $\theta$ that just arrives into this ordered list according to its own score 
\[\bar{\lambda}_\theta \mathbb{P}( Q_\theta -\xi \geq 0)\]

Let $\bX(1)$ denote the cache state that evicts a token from conversation $i(1)$, and $\bX(k)$ denote the cache state that evicts a token from conversation $i(k)$ with $k>1$. 
\begin{align*}
    & J_{k+1}(\tilde{\blam}, \tilde{L}, \bX(1)) \\
    & = \sum_{i\in \textsf{dim}(\bX(1))+1}\tilde{p}_i \mathbb{E}_{Q_i}[(\tilde{L}_i + Q_i - X_i(1) - \xi)^+] \\
    & \quad \quad + \tilde{p}_{i(1)} \mathbb{E}_{Q_{i(1)},A_{i(1)}} \left[\min_{\bX'(1) \in \mathcal{X}(\bX(1), i(1), \tilde{L}_{i(1)}+Q_{i(1)} + A_{i(1)}) } J_{k+2}(\Gamma(\tilde{\blam}, i(1)), \Psi(\tilde{L}, i(1), Q_{i(1)}+ A_{i(1)}), \bX'(1))\right] \\
    & \quad \quad + \tilde{p}_{i(k)} \mathbb{E}_{Q_{i(k)}, A_{i(k)}} \left[\min_{\bX'(1) \in \mathcal{X}(\bX(1), i(k), \tilde{L}_{i(k)}+Q_{i(k)}+A_{i(k)}) } J_{k+2}(\Gamma(\tilde{\blam}, i(k)), \Psi(\tilde{L}, i(k),Q_{i(k)}+A_{i(k)}), \bX'(1))\right] \\
    & \quad \quad + \sum_{i\neq i(1), i(k)} \mathbb{E}_{Q_i, A_i}\left[\min_{\bX'(1) \in \mathcal{X}(\bX(1), i, \tilde{L}_i+Q_i+A_i) } J_{k+2}(\Gamma(\tilde{\blam}, i), \Psi(\tilde{L}, i, Q_i + A_i), \bX'(1))\right]\\
    & \leq \sum_{i\in \textsf{dim}(\bX(1))+1}\tilde{p}_i \mathbb{E}_{Q_i}[(\tilde{L}_i + Q_i - X_i(k) - \xi)^+] \\
    & \quad \quad + \tilde{p}_{i(1)} \mathbb{E}_{Q_{i(1)}, A_{i(1)}} \left[\min_{\bX'(k) \in \mathcal{X}(\bX(k), i(1), \tilde{L}_{i(1)}+Q_{i(1)}+A_{i(1)}) } J_{k+2}(\Gamma(\tilde{\blam}, i(1)), \Psi(\tilde{L}, i(1), Q_{i(1)}+A_{i(1)}), \bX'(k))\right] \\
    & \quad \quad + \tilde{p}_{i(k)} \mathbb{E}_{Q_{i(k)}, A_{i(k)}} \left[\min_{\bX'(k) \in \mathcal{X}(\bX(k), i(k), \tilde{L}_{i(k)}+Q_{i(k)} + A_{i(k)}) } J_{k+2}(\Gamma(\tilde{\blam}, i(k)), \Psi(\tilde{L}, i(k),  Q_{i(k)}+A_{i(k)}), \bX'(k))\right] \\
    & \quad \quad + \sum_{i\neq i(1), i(k)} \mathbb{E}_{Q_i, A_i}\left[\min_{\bX'(k) \in \mathcal{X}(\bX(k), i, \tilde{L}_i+Q_i+A_i) } J_{k+2}(\Gamma(\tilde{\blam}, i), \Psi(\tilde{L}, i, Q_i + A_i), \bX'(k))\right]\\
    & =  J_{k+1}(\tilde{\blam}, \tilde{L}, \bX(k)),
\end{align*}
where the inequality holds as 
\begin{itemize}
    \item by Definition \ref{def: TLRU rand}, $\bX(1)$ is the optimal solution while $\bX(k)$ is a feasible solution, and $\tilde{p}_i$ are proportional to $\tilde{\lambda}_i$, thus
        \[\sum_{i \in \textsf{dim}(\bX(1))}\tilde{p}_i  \mathbb{E}_{Q_i}[(\tilde{L}_i + Q - X_i(1) - \xi)^+] \leq \sum_{i \in \textsf{dim}(\bX(1))}\tilde{p}_i\mathbb{E}_{Q_i}[(\tilde{L}_i + Q - X_i(k) - \xi)^+]\]
    and the expected cost incurred when a new conversation arrives (with probability $\tilde{p}_{\textsf{dim}(\bX(1))+1}$) is $\mathbb{E}_{Q}[(Q-\xi)^+]$ for both caching state, thus 
    \[\sum_{i \in \textsf{dim}(\bX(1))+1}\tilde{p}_i \mathbb{E}_{Q_i}[(\tilde{L}_i + Q_i - X_i(1) - \xi)^+] \leq \sum_{i\in \textsf{dim}(\bX(1))+1}\tilde{p}_i \mathbb{E}_{Q_i}[(\tilde{L}_i + Q_i - X_i(k) - \xi)^+]\]
    \item by induction hypothesis, the optimal $\bX'(1)^*$ and $\bX'(k)^*$ are given by the optimization problem \eqref{eq: opt rand new arrival}. Fix a user prompt length $Q_i$ and a model response length $A_i$ and we compare the cost-to-do under $\bX(1)$ and $\bX(k)$. 
    \begin{itemize}
        \item if conversation $i(1)$ arrives next, then one need to evict one more token from $\bX(1)$ than from $\bX(k)$.
        Suppose the extra token evicted from $\bX(1)$ is from conversation $i(k)$, then $\bX'(1)^* = \bX'(k)^*$. If not, then this means the extra token evicted is from another conversation with better ranking criterion, thus we have 
            \begin{align*}
            & J_{k+2}(\Gamma(\tilde{\blam}, i(1)), \Psi(\tilde{L}, i(1), Q_{i(1)}+A_{i(1)}), \bX'(1)^*) \\
            \leq & J_{k+2}(\Gamma(\tilde{\blam}, i(1)), \Psi(\tilde{L}, i(1), Q_{i(1)}+A_{i(1)}), \bX'(k)^*)
        \end{align*}
        by the induction hypothesis.
        \item if conversation $i(k)$ arrives next, then one need to evict one more token from $\bX(k)$ than from $\bX(1)$. In the optional caching model, the extra token evicted form $\bX(k)$ must be from conversation $i(1)$ by the definition of the ranking of conversations, thus $\bX'(1)^* = \bX'(k)^*$. 
        \begin{align*}
            & J_{k+2}(\Gamma(\tilde{\blam}, i(k)), \Psi(\tilde{L}, i(k), Q_{i(k)}+A_{i(k)}), \bX'(k)^*) \\
            = & J_{k+2}(\Gamma(\tilde{\blam}, i(k)), \Psi(\tilde{L}, i(k), Q_{i(k)}+A_{i(k)}), \bX'(k)^*)
        \end{align*}
        \item if conversation other than $i(1), i(k)$ arrives next, then $\bX'(k)^*$ and $\bX'(1)^*$ need to evict the same number of tokens. If $\bX'(k)$ evicts at least one token from conversation $i(k)$, then $\bX'(1)^* = \bX'(k)^*$. If not, then this means $\bX'(1)^*$ evicts one token from another conversation with better ranking criterion, thus we have 
        \[ J_{k+2}(\Gamma(\tilde{\blam}, i)), \Psi(\tilde{L}, i, A_{i}), \bX'(1)^*) \leq J_{k+2}(\Gamma(\tilde{\blam}, i), \Psi(\tilde{L}, i, A_{i}), \bX'(k)^*).\]
    \end{itemize}
\end{itemize}
Therefore, by induction, the result holds for all $k\geq 1$.
\end{proof}

\noindent\textbf{Greedy Implementation.}
The optimization problem \eqref{eq: opt rand new arrival} need not be solved explicitly as a token‑by‑token greedy procedure suffices. At a high-level, the policy ranks each token by arrival rates weighted by its counterfactual cost, i.e., the cost increase when we evict this token. 
\[\mathbb{P}(L_i + Q_i - \xi \geq X_i) = \mathbb{E}[(L_i + Q_i - \xi - (X_i - 1))^+] - \mathbb{E}[(L_i + Q_i - \xi - X_i)^+]\]
i.e., the difference in expected cost if we further evict one token when we have $X_i$ tokens in cache. 

\begin{algorithm}[!htbp]
\caption{Expected‑Tail‑Optimized LRU Policy}
\label{alg:expected-tail-optimized-lru}
\SetAlgoLined
\SetAlgoNoEnd
\DontPrintSemicolon
\KwIn{Number of conversations $N$, cache sizes $\{X_i\}$, current lengths $\{L_i\}$, belief turn rates $\{\lambda_i\}$, distribution of length of user prompt $\{Q_i\}$, threshold $\xi$, arriving conversation $\theta$, arriving user prompt length Q, arriving model response length A, tokens to evict $n$}
\KwOut{Updated cache sizes $\{X_i\}$}
$\mathrm{evicted}\gets 0$ \\
$L_\theta \gets L_\theta + Q + A$ \tcp*{Update system state for arriving conversation $\theta$}
$X_\theta \gets L_\theta$\\
$\lambda_\theta \gets \bar{\lambda}_\theta$\\
\For{each $i \in E$}{
        Compute $v_i \gets \lambda_i \cdot \mathbb{P}(L_i + Q_i - \xi \geq X_i)$ \tcp*{Ranking criterion}
    }
\While{$\mathrm{evicted} < n$}{
    Find $j = \arg\min_{i \in [N], X_i \geq 1} v_i$ \tcp*{Conversation with minimum value}
    $X_j \gets X_j - 1$ \tcp*{Evict one token}
        $\mathrm{evicted} \gets \mathrm{evicted} +1$ \\
    $v_j \gets \lambda_j \cdot \mathbb{P}(L_j + Q_j - \xi \geq X_j)$ \tcp*{Update ranking criterion}
}
\Return{$\{X_i\}$.}
\end{algorithm}
In the implementation of the policy, one can use min-heap to process which token to evict using the ranking criterion. The computational complexity of the policy is given by $\mathcal{O}(|E| + n \log{|E|})$, where $|E|$ is number of conversations with non-zero cached tokens and $n$ is the number of tokens one needs to evict. 

We show that policy \ref{alg:expected-tail-optimized-lru} indeed returns a cache state that is an optimal solution to the optimization problem \eqref{eq: opt rand new arrival}.

\begin{lemma}
    Policy \ref{alg:expected-tail-optimized-lru} returns an optimal solution to the optimization problem \eqref{eq: opt rand new arrival}.
\end{lemma}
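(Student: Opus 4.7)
The plan is to recast \eqref{eq: opt rand new arrival} as a separable convex integer minimization with a single knapsack constraint, and then verify that Policy \ref{alg:expected-tail-optimized-lru} computes an optimal point via a standard exchange argument.

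First, I would write each conversation's contribution as $f_i(Y_i) := \lambda_i(t^+)\,\mathbb{E}[(L_i(t^+) + Q_i - Y_i - \xi)^+]$. Since $(z - y)^+$ is convex and non-increasing in $y$ for each realization $z$, $f_i$ is convex and non-increasing in $Y_i$. The discrete marginal cost of going from $y$ cached tokens to $y-1$ in conversation $i$ is
\begin{align*}
    \Delta_i(y) \;:=\; f_i(y-1) - f_i(y) \;=\; \lambda_i(t^+)\,\mathbb{P}\bigl(L_i(t^+) + Q_i - \xi \geq y\bigr),
\end{align*}
which is exactly the per-token score used by the algorithm. Convexity of $f_i$ is equivalent to $\Delta_i(y)$ being non-increasing in $y$, so the marginal cost of evicting additional tokens from conversation $i$ is monotonically non-decreasing as more tokens are removed.

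Second, because $f_i$ is non-increasing it is never strictly harmful to cache more, so any optimal solution saturates the constraints and evicts exactly $n := (\sum_i \bar Y_i - C)^+$ tokens in total, where $\bar Y_i = L_\theta(t^+)$ for $i = \theta$ and $\bar Y_i = X_i(t^-)$ otherwise. Letting $n_i := \bar Y_i - Y_i$, the problem reduces to choosing integer counts with $0 \leq n_i \leq \bar Y_i$ and $\sum_i n_i = n$ to minimize $\sum_i \sum_{k=1}^{n_i} \Delta_i(\bar Y_i - k + 1)$, and monotonicity of $\Delta_i$ renders this a separable convex program.

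Third, let $\{n_i^G\}$ denote the eviction counts produced by Policy \ref{alg:expected-tail-optimized-lru} and $\{n_i^*\}$ those of any optimal solution with $\sum_i n_i^* = n$. If the two differ there exist $i, j$ with $n_i^G > n_i^*$ and $n_j^G < n_j^*$. Consider the moment in the greedy run at which the algorithm performs its $(n_i^*+1)$-th eviction from $i$; at that moment it has already executed $n_j^{\mathrm{curr}} \leq n_j^G$ evictions from $j$, and the selection rule yields
\begin{align*}
    \Delta_i(\bar Y_i - n_i^*) \;\leq\; \Delta_j(\bar Y_j - n_j^{\mathrm{curr}}) \;\leq\; \Delta_j(\bar Y_j - n_j^* + 1),
\end{align*}
where the second inequality uses that $\Delta_j$ is non-increasing and that $n_j^{\mathrm{curr}} \leq n_j^G < n_j^*$, i.e.\ $\bar Y_j - n_j^{\mathrm{curr}} \geq \bar Y_j - n_j^* + 1$. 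Swapping one eviction from $j$ to $i$ in the optimal solution preserves feasibility (since $n_j^* - 1 \geq n_j^G \geq 0$ and $n_i^* + 1 \leq n_i^G \leq \bar Y_i$) and changes the objective by $\Delta_i(\bar Y_i - n_i^*) - \Delta_j(\bar Y_j - n_j^* + 1) \leq 0$. Iterating the swap converts $\{n_i^*\}$ into $\{n_i^G\}$ without increasing cost, so the greedy output is optimal.

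The main obstacle I anticipate is the bookkeeping in the exchange step, specifically aligning the intermediate greedy state $n_j^{\mathrm{curr}}$ with the hypothetical optimal counts so that monotonicity of $\Delta_j$ yields the chained inequality in the correct direction; once this indexing is fixed, feasibility of the swap and the $\mathcal{O}(|E| + n\log|E|)$ complexity of the min-heap implementation follow immediately.
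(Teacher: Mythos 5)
Your proof is correct and takes essentially the same route as the paper — a greedy-exchange argument for a separable convex knapsack built on the identical marginal-cost identity $\Delta_i(y)=\lambda_i(t^+)\,\mathbb{P}(L_i(t^+)+Q_i-\xi\ge y)$. If anything, your version is more complete: the paper argues by contradiction that a non-greedy optimal solution must admit a strictly improving single swap, without really justifying why such a pair $(i,j)$ must exist, whereas your step-by-step transformation of $\{n_i^*\}$ into $\{n_i^G\}$ makes the monotone marginal structure (discrete convexity of $f_i$) carry that load explicitly and handles the feasibility bookkeeping carefully.
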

\begin{proof}
We prove by contradiction. Note that it is possible for the policy to return multiple optimal solutions, and it is also possible for the optimization problem \eqref{eq: opt rand new arrival} to have multiple optimal solutions. Suppose not, then the two set of solutions do not intersect. Let $\bX^*$ denote one optimal solution. Then there must exist two conversations $i$, $j$ such that $X_j^* \geq 1$ and 
\[\lambda_i \mathbb{P}(L_i + Q_i - \xi \geq X_i^*+1) > \lambda_j \mathbb{P}(L_j + A_j - \xi \geq X_j^*),\]

Then we can construct another solution $\bX'$ such that $X'_k = X^*_k$ for $k \neq i,j$, and $X_i' = X_i^*+1, X_j' = X_j^*-1$. Then the difference between the objective values of $\bX'$ and $\bX^*$ is given by 
\begin{align*}
    \textsf{OBJ}(\bX')  - \textsf{OBJ}(\bX^*) & = \lambda_{i} \mathbb{E}[(L_{i} + Q_{i} - \xi - (X^*_{i}+1))^+] + \lambda_{j} \mathbb{E}[(L_j+ Q_j - \xi -( X^*_j-1))^+]\\
    & \quad \quad - \left(\lambda_{i} \mathbb{E}[(L_{i} + Q_{i} - \xi - X^*_{i})^+] + \lambda_{j} \mathbb{E}[(L_j+ Q_j - \xi -X_j^*)^+]\right) \\
    & = \lambda_{j} \mathbb{P}(L_{j} + A_{j} - \xi \geq X_{j}) - \lambda_{i} \mathbb{P}(L_{i} + Q_{i} - \xi \geq X_{i}+1)  \\
    & < 0, 
\end{align*}
which contradicts the optimality of $\bX^*$.
\end{proof}

\section{Discussion on Forced Caching}\label{apx: forced caching}
\noindent\textbf{Implementation of Tail-Optimized LRU.}
To implement forced caching, especially at GPU level, the server needs to decide which block to evict as serving the turn. The server may not know the total number of cache blocks to evict due to the uncertainty in the model response length, nevertheless the server can repeatedly call our policy to evict more tokens if needed.

\noindent\textbf{Hindsight optimal policy.}
To model forced caching, we replace optional caching constraint \eqref{cons: optional caching} with
\begin{align}\label{cons: forced caching}
    x_{i,t+1} = \sum_{j=1}^{t} (q_{i,j} + a_{i,j}), \forall i \in [N], t\in \mathcal{T}_i \text{ and } t < T,
\end{align} 
i.e., when a turn arrives, the server is required to cache its whole conversation history including newly generated response.
Theorem \ref{thm: modified baledy is optimal} continues to hold under forced caching.


\noindent\textbf{Expected-Tail-Optimized LRU.}
To model forced caching, we replace feasible caching decision space under optional caching with
        \[\mathcal{X}_{\mathcal{F}}(\bm{X}, \theta, L) = \{\bm{Y} \in \mathbb{N}^{\textsf{dim}(\bX, \theta)}: \sum_{i} Y_i \leq C,  Y_\theta = L, 0 \leq Y_i \leq X_i, i\neq \theta\}.\]
Theorem \ref{thm: optimality of tail LRU rand} continues to hold under forced caching, i.e., Expected-Tail-Optimized LRU remains to be optimal, if
\begin{itemize}
    \item every future prompt (if it arrives) has a known, fixed length $Q\ge0$. Here this fixed length can be heterogeneous across conversations and across turns. Crucially, the decision-maker still does not know if any given conversation will return; they only know that should it return, its next-turn question length will be $Q$. In this case, Expected-Tail-Optimized LRU is reduced to a deterministic version as stated in policy \ref{alg:tail-optimized-lru}.
    \item conversations have homogeneous turn rates $\lambda_{\textsf{turn}}$.
\end{itemize}
This fixed‑prompt‑length assumption holds when prompts are pre‑specified. When $Q=0$ after the first turn and responses also have zero length, our model reduces to classic paging with unit page size.

\section{Additional Figures}\label{sec:additional-figures}


\begin{figure}[!htbp]
    \centering
    \includegraphics[width=\linewidth]{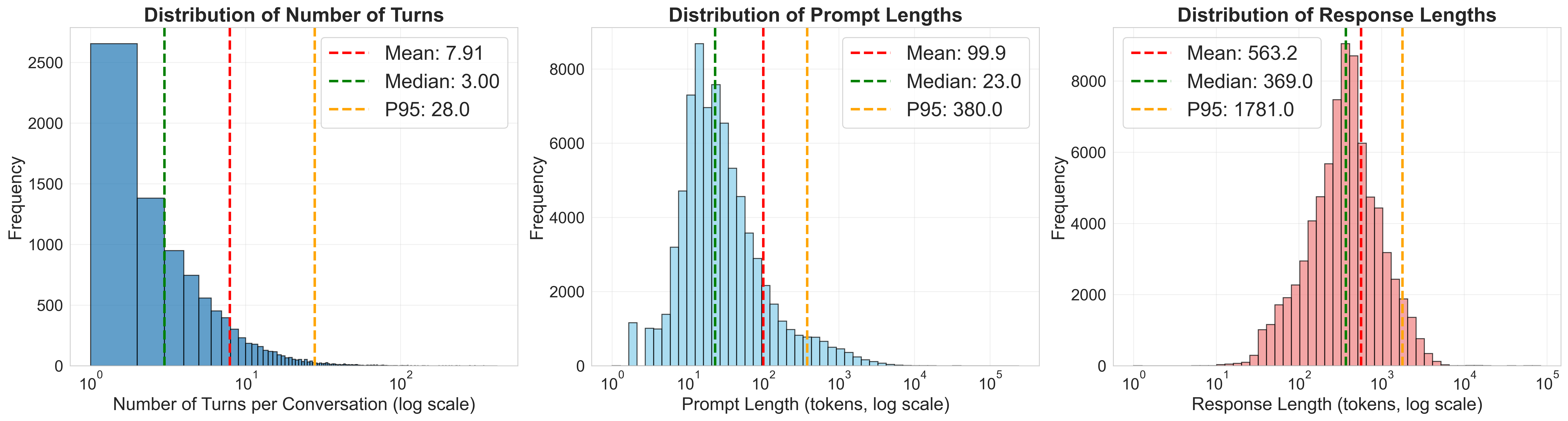}
    \caption{Distributions of turns and tokens of ShareGPT~\citep{sharegpt} datasets (sampled 10,000 conversations).}
    \label{fig: sharegpt distribution}
\end{figure}
\section{KV Cache Size Computation}\label{apx: KV cache size}

We calculate the KV cache memory requirements for the Vicuna-7B model. For 10,000 tokens stored in Float16 precision, the total KV cache size is approximately \textbf{4.88 GB}. With Float32 precision, this memory requirement doubles to approximately 9.77 GB.

\subsection{Model Configuration}
The following parameters are from the Vicuna-7B-v1.5 model configuration:\footnote{\url{https://huggingface.co/lmsys/vicuna-7b-v1.5/blob/main/config.json}}
\begin{itemize}
    \item Hidden Size: 4096
    \item Number of Attention Heads: 32
    \item Number of Hidden Layers: 32
    \item Number of Key-Value Heads: 32
    \item Head Size: 128 (Hidden Size $\div$ Number of Attention Heads)
    \item Data Type: Float16 (2 bytes per value)
\end{itemize}

\subsection{Calculation}
The KV cache size per token is computed as:
\begin{align*}
\text{KV cache per token (bytes)} &= 2 \times \text{Layers} \times \text{KV Heads} \times \text{Head Size} \times \text{Data Type Size} \\
&= 2 \times 32 \times 32 \times 128 \times 2 \\
&= 524{,}288 \text{ bytes}
\end{align*}
where the leading factor of 2 accounts for storing both Key (K) and Value (V) matrices.

For 10,000 tokens, the total memory requirement is:
\begin{align*}
\text{Total KV cache (bytes)} &= 10{,}000 \times 524{,}288 \\
&= 5{,}242{,}880{,}000 \text{ bytes}
\end{align*}

Converting to gigabytes:
\begin{align*}
\text{KV cache size (GB)} &= \frac{5{,}242{,}880{,}000}{1024^3} \\
&\approx 4.8828 \text{ GB}
\end{align*}

\section{Results on ShareGPT with Synthetic Timestamps}\label{apx: additional experiments}
ShareGPT~\citep{sharegpt} does not include timestamps of each request, thus we generate them with the stochastic model described in Section \ref{sec: stochastic model}. Specifically, for each conversation, we draw exponential inter-arrival times with rate $\lambda_{\textsf{conv}} = 1$, then for each conversation, we generate inter-arrival times between each turn within a conversation using Exponential distribution with rate $\lambda_{\textsf{turn}} = 3$. The average prompt length in ShareGPT is approximately 100 tokens (we thus set $\hat{Q} = 100$ in implementation), with an average of 3.5 turns per conversation.  

Tables \ref{tab:improvement-tlru-lru-sharegpt}–\ref{tab:improvement-tlru-thre-sharegpt} show that Tail-Optimized LRU still beats both LRU and Threshold-LRU: it trims P90 by up to 10\%, and P95 by up to 7\%. 
The smaller improvement compared to the ones observed in WildChat (Tables \ref{tab:improvement-tlru-lru}–\ref{tab:improvement-tlru-thre}) stem from the already-high base latencies under LRU (with capacity $C=1000$ under LRU, medium, P90, P95, P99 tail latencies are roughly 209 ms, 1415 ms, 2447 ms, 3649 ms), thus percentage improvements shrink. 

\begin{table}[!htp]\centering
\caption{Relative latency improvement of T-LRU over LRU with various $\xi_s$ (ShareGPT)}
\scriptsize
\begin{tabular}{rrrrrrrrrrr}\toprule
&\multicolumn{2}{c}{$\xi_s$ = 50ms} &\multicolumn{2}{c}{$\xi_s$ = 100ms} &\multicolumn{2}{c}{$\xi_s$ = 200ms} &\multicolumn{2}{c}{$\xi_s$ = 300ms} &\multicolumn{2}{c}{$\xi_s$ = 500ms}  \\\cmidrule{1-11}
Capacity &p90 &p95 &p90 &p95 &p90 &p95 &p90 &p95 &p90 &p95  \\\midrule
1000 &\cellcolor[HTML]{ffffff}0.0\% &\cellcolor[HTML]{ffffff}0.0\% &\cellcolor[HTML]{ffffff}0.0\% &\cellcolor[HTML]{ffffff}0.0\% &\cellcolor[HTML]{ffffff}0.0\% &\cellcolor[HTML]{ffffff}0.0\% &\cellcolor[HTML]{dcf1e6}0.9\% &\cellcolor[HTML]{dcf1e6}0.6\% &\cellcolor[HTML]{dcf1e6}0.9\% &\cellcolor[HTML]{dcf1e6}0.9\%  \\
2000 &\cellcolor[HTML]{dcf1e6}0.7\% &\cellcolor[HTML]{ffffff}0.0\% &\cellcolor[HTML]{dcf1e6}0.7\% &\cellcolor[HTML]{dcf1e6}0.2\% &\cellcolor[HTML]{dcf1e6}0.9\% &\cellcolor[HTML]{dcf1e6}1.5\% &\cellcolor[HTML]{dcf1e6}1.4\% &\cellcolor[HTML]{dcf1e6}2.0\% &\cellcolor[HTML]{c9e9d9}2.7\% &\cellcolor[HTML]{c9e9d9}2.3\%  \\
4000 &\cellcolor[HTML]{dcf1e6}0.6\% &\cellcolor[HTML]{dcf1e6}0.6\% &\cellcolor[HTML]{dcf1e6}0.7\% &\cellcolor[HTML]{dcf1e6}1.8\% &\cellcolor[HTML]{dcf1e6}2.0\% &\cellcolor[HTML]{c9e9d9}2.5\% &\cellcolor[HTML]{c9e9d9}2.5\% &\cellcolor[HTML]{c9e9d9}3.0\% &\cellcolor[HTML]{c9e9d9}4.2\% &\cellcolor[HTML]{c9e9d9}3.5\%  \\
6000 &\cellcolor[HTML]{dcf1e6}0.7\% &\cellcolor[HTML]{dcf1e6}1.3\% &\cellcolor[HTML]{c9e9d9}2.1\% &\cellcolor[HTML]{c9e9d9}2.9\% &\cellcolor[HTML]{c9e9d9}4.9\% &\cellcolor[HTML]{c9e9d9}3.6\% &\cellcolor[HTML]{b8e2ce}8.1\% &\cellcolor[HTML]{c9e9d9}4.2\% &\cellcolor[HTML]{96d4b6}10.0\% &\cellcolor[HTML]{c9e9d9}4.7\%  \\
8000 &\cellcolor[HTML]{dcf1e6}1.5\% &\cellcolor[HTML]{dcf1e6}0.9\% &\cellcolor[HTML]{c9e9d9}2.6\% &\cellcolor[HTML]{dcf1e6}1.8\% &\cellcolor[HTML]{c9e9d9}3.5\% &\cellcolor[HTML]{c9e9d9}2.5\% &\cellcolor[HTML]{c9e9d9}4.8\% &\cellcolor[HTML]{c9e9d9}3.6\% &\cellcolor[HTML]{b8e2ce}9.6\% &\cellcolor[HTML]{c9e9d9}5.0\%  \\
10000 &\cellcolor[HTML]{dcf1e6}0.9\% &\cellcolor[HTML]{dcf1e6}0.7\% &\cellcolor[HTML]{c9e9d9}3.6\% &\cellcolor[HTML]{dcf1e6}1.7\% &\cellcolor[HTML]{c9e9d9}4.3\% &\cellcolor[HTML]{c9e9d9}2.9\% &\cellcolor[HTML]{b8e2ce}5.1\% &\cellcolor[HTML]{c9e9d9}3.6\% &\cellcolor[HTML]{b8e2ce}9.0\% &\cellcolor[HTML]{b8e2ce}6.9\%  \\
\bottomrule
\end{tabular} \label{tab:improvement-tlru-lru-sharegpt}
\end{table}
\begin{table}[!htp]\centering
\caption{Relative latency improvement of T-LRU over Threshold-LRU with various $\xi_s$ (ShareGPT)}
\scriptsize
\begin{tabular}{rrrrrrrrrrr}\toprule
&\multicolumn{2}{c}{$\xi_s$ = 50ms} &\multicolumn{2}{c}{$\xi_s$ = 100ms} &\multicolumn{2}{c}{$\xi_s$ = 200ms} &\multicolumn{2}{c}{$\xi_s$ = 300ms} &\multicolumn{2}{c}{$\xi_s$ = 500ms}  \\\cmidrule{1-11}
Capacity &p90 &p95 &p90 &p95 &p90 &p95 &p90 &p95 &p90 &p95  \\\midrule
1000 &\cellcolor[HTML]{ffffff}0.0\% &\cellcolor[HTML]{ffffff}0.0\% &\cellcolor[HTML]{ffffff}0.0\% &\cellcolor[HTML]{ffffff}0.0\% &\cellcolor[HTML]{ffffff}0.0\% &\cellcolor[HTML]{ffffff}0.0\% &\cellcolor[HTML]{dcf1e6}0.9\% &\cellcolor[HTML]{dcf1e6}0.6\% &\cellcolor[HTML]{dcf1e6}0.9\% &\cellcolor[HTML]{dcf1e6}0.9\%  \\
2000 &\cellcolor[HTML]{dcf1e6}0.7\% &\cellcolor[HTML]{ffffff}0.0\% &\cellcolor[HTML]{dcf1e6}0.7\% &\cellcolor[HTML]{dcf1e6}0.2\% &\cellcolor[HTML]{dcf1e6}0.9\% &\cellcolor[HTML]{dcf1e6}1.5\% &\cellcolor[HTML]{dcf1e6}1.4\% &\cellcolor[HTML]{dcf1e6}2.0\% &\cellcolor[HTML]{c9e9d9}2.7\% &\cellcolor[HTML]{c9e9d9}2.3\%  \\
4000 &\cellcolor[HTML]{dcf1e6}0.6\% &\cellcolor[HTML]{dcf1e6}0.6\% &\cellcolor[HTML]{dcf1e6}0.7\% &\cellcolor[HTML]{dcf1e6}1.8\% &\cellcolor[HTML]{dcf1e6}2.0\% &\cellcolor[HTML]{c9e9d9}2.5\% &\cellcolor[HTML]{c9e9d9}2.5\% &\cellcolor[HTML]{c9e9d9}3.0\% &\cellcolor[HTML]{c9e9d9}4.2\% &\cellcolor[HTML]{c9e9d9}3.5\%  \\
6000 &\cellcolor[HTML]{dcf1e6}0.7\% &\cellcolor[HTML]{dcf1e6}0.8\% &\cellcolor[HTML]{c9e9d9}2.1\% &\cellcolor[HTML]{c9e9d9}2.4\% &\cellcolor[HTML]{c9e9d9}4.9\% &\cellcolor[HTML]{c9e9d9}3.1\% &\cellcolor[HTML]{b8e2ce}8.1\% &\cellcolor[HTML]{c9e9d9}3.7\% &\cellcolor[HTML]{96d4b6}10.0\% &\cellcolor[HTML]{c9e9d9}4.2\%  \\
8000 &\cellcolor[HTML]{dcf1e6}0.8\% &\cellcolor[HTML]{dcf1e6}0.3\% &\cellcolor[HTML]{dcf1e6}2.0\% &\cellcolor[HTML]{dcf1e6}1.2\% &\cellcolor[HTML]{c9e9d9}2.9\% &\cellcolor[HTML]{dcf1e6}2.0\% &\cellcolor[HTML]{c9e9d9}4.1\% &\cellcolor[HTML]{c9e9d9}3.0\% &\cellcolor[HTML]{b8e2ce}9.0\% &\cellcolor[HTML]{c9e9d9}4.5\%  \\
10000 &\cellcolor[HTML]{dcf1e6}0.9\% &\cellcolor[HTML]{dcf1e6}0.6\% &\cellcolor[HTML]{c9e9d9}3.6\% &\cellcolor[HTML]{dcf1e6}1.6\% &\cellcolor[HTML]{c9e9d9}4.3\% &\cellcolor[HTML]{c9e9d9}2.8\% &\cellcolor[HTML]{b8e2ce}5.1\% &\cellcolor[HTML]{c9e9d9}3.5\% &\cellcolor[HTML]{b8e2ce}9.0\% &\cellcolor[HTML]{b8e2ce}6.8\%  \\
\bottomrule
\end{tabular} \label{tab:improvement-tlru-thre-sharegpt}
\end{table}

Using a 200 ms SLO, T-LRU cuts the share of requests above the budget by 2–8\% across capacities (Table \ref{tab:merged-improvement-tlru-sharegpt}). Improvements again peak when $\xi_s$ is near the desired percentile; extremely high $\xi_s$ trades those mid-tail wins for heavier protection of the extreme tail, echoing the WildChat pattern.

\begin{table}[!htp]\centering
\caption{Relative improvement of T-LRU: \% reduction in requests with latency $>$ 200ms (ShareGPT)}
\scriptsize
\begin{tabular}{rrrrrrrrrrr}\toprule
Capacity &\multicolumn{2}{c}{$\xi_s$ = 50ms} &\multicolumn{2}{c}{$\xi_s$ = 100ms} &\multicolumn{2}{c}{$\xi_s$ = 150ms} &\multicolumn{2}{c}{$\xi_s$ = 200ms} &\multicolumn{2}{c}{$\xi_s$ = 500ms} \\
 &LRU &Thre-LRU &LRU &Thre-LRU &LRU &Thre-LRU &LRU &Thre-LRU &LRU &Thre-LRU \\\cmidrule{1-11}
1000 & \cellcolor[HTML]{dcf1e6} 0.7\% & \cellcolor[HTML]{ffffff} 0.0\% & \cellcolor[HTML]{dcf1e6} 1.7\% & \cellcolor[HTML]{dcf1e6} 1.0\% & \cellcolor[HTML]{dcf1e6} 2.3\% & \cellcolor[HTML]{dcf1e6} 1.6\% & \cellcolor[HTML]{dcf1e6} 2.3\% & \cellcolor[HTML]{dcf1e6} 1.6\% & \cellcolor[HTML]{ffd6d6} -3.5\% & \cellcolor[HTML]{ffd6d6} -4.2\% \\
2000 & \cellcolor[HTML]{dcf1e6} 1.1\% & \cellcolor[HTML]{dcf1e6} 1.0\% & \cellcolor[HTML]{dcf1e6} 1.9\% & \cellcolor[HTML]{dcf1e6} 1.8\% & \cellcolor[HTML]{dcf1e6} 2.6\% & \cellcolor[HTML]{dcf1e6} 2.5\% & \cellcolor[HTML]{dcf1e6} 3.1\% & \cellcolor[HTML]{dcf1e6} 3.0\% & \cellcolor[HTML]{ffd6d6} -8.6\% & \cellcolor[HTML]{ffd6d6} -8.7\% \\
4000 & \cellcolor[HTML]{dcf1e6} 1.8\% & \cellcolor[HTML]{dcf1e6} 1.3\% & \cellcolor[HTML]{dcf1e6} 3.9\% & \cellcolor[HTML]{dcf1e6} 3.4\% & \cellcolor[HTML]{dcf1e6} 4.7\% & \cellcolor[HTML]{dcf1e6} 4.2\% & \cellcolor[HTML]{dcf1e6} 4.8\% & \cellcolor[HTML]{dcf1e6} 4.3\% & \cellcolor[HTML]{ffd6d6} -15.6\% & \cellcolor[HTML]{ffd6d6} -16.2\% \\
6000 & \cellcolor[HTML]{dcf1e6} 1.9\% & \cellcolor[HTML]{dcf1e6} 1.1\% & \cellcolor[HTML]{dcf1e6} 4.7\% & \cellcolor[HTML]{dcf1e6} 3.9\% & \cellcolor[HTML]{c9e9d9} 6.0\% & \cellcolor[HTML]{c9e9d9} 5.2\% & \cellcolor[HTML]{dcf1e6} 4.7\% & \cellcolor[HTML]{dcf1e6} 3.9\% & \cellcolor[HTML]{ffd6d6} -26.2\% & \cellcolor[HTML]{ffd6d6} -27.2\% \\
8000 & \cellcolor[HTML]{dcf1e6} 1.2\% & \cellcolor[HTML]{dcf1e6} 0.9\% & \cellcolor[HTML]{dcf1e6} 4.3\% & \cellcolor[HTML]{dcf1e6} 4.0\% & \cellcolor[HTML]{dcf1e6} 4.9\% & \cellcolor[HTML]{dcf1e6} 4.6\% & \cellcolor[HTML]{dcf1e6} 2.9\% & \cellcolor[HTML]{dcf1e6} 2.6\% & \cellcolor[HTML]{ffd6d6} -39.3\% & \cellcolor[HTML]{ffd6d6} -39.7\% \\
10000 & \cellcolor[HTML]{dcf1e6} 2.2\% & \cellcolor[HTML]{dcf1e6} 1.8\% & \cellcolor[HTML]{c9e9d9} 6.5\% & \cellcolor[HTML]{c9e9d9} 6.1\% & \cellcolor[HTML]{c9e9d9} 7.9\% & \cellcolor[HTML]{c9e9d9} 7.6\% & \cellcolor[HTML]{dcf1e6} 3.3\% & \cellcolor[HTML]{dcf1e6} 3.0\% & \cellcolor[HTML]{ffd6d6} -50.7\% & \cellcolor[HTML]{ffd6d6} -51.2\% \\
\bottomrule
\end{tabular} \label{tab:merged-improvement-tlru-sharegpt}
\end{table}

Lastly, in spite of the extra foresight, End-Aware T-LRU and Length-Aware T-LRU show only marginal gains over T-LRU, and all three policies perform very closely to Tail-Optimized Belady, the optimal hindsight policy that minimizes the Tail Excess Latency. This is exactly what our stochastic model predicts: under Poisson arrivals, LRU’s recency order is already a near-perfect proxy for ``furthest in the future'', the rule the hindsight policy uses for eviction, so extra foresight offers diminishing returns.

\begin{figure}[ht]
    \centering
    \begin{subfigure}[t]{\textwidth} 
        \centering \includegraphics[width=\linewidth]{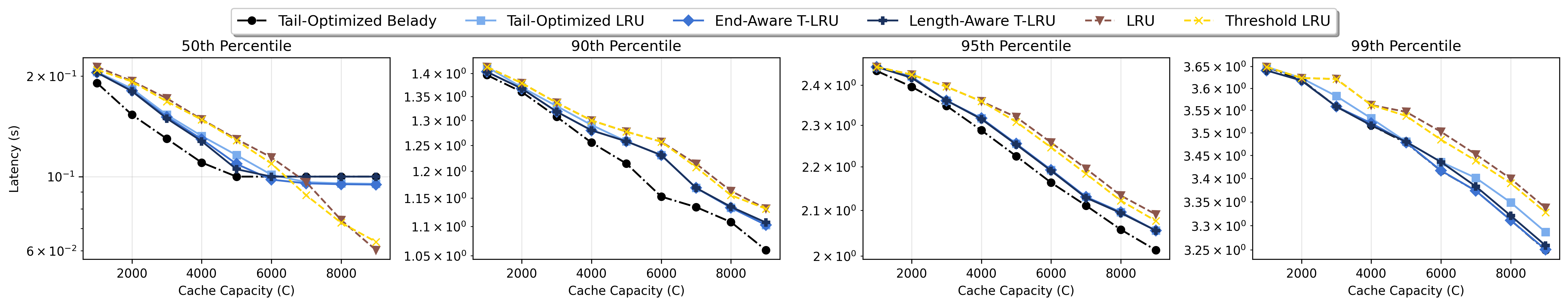}
    \end{subfigure} \hfill
    \begin{subfigure}[t]{\textwidth} 
        \centering \includegraphics[width=\linewidth]{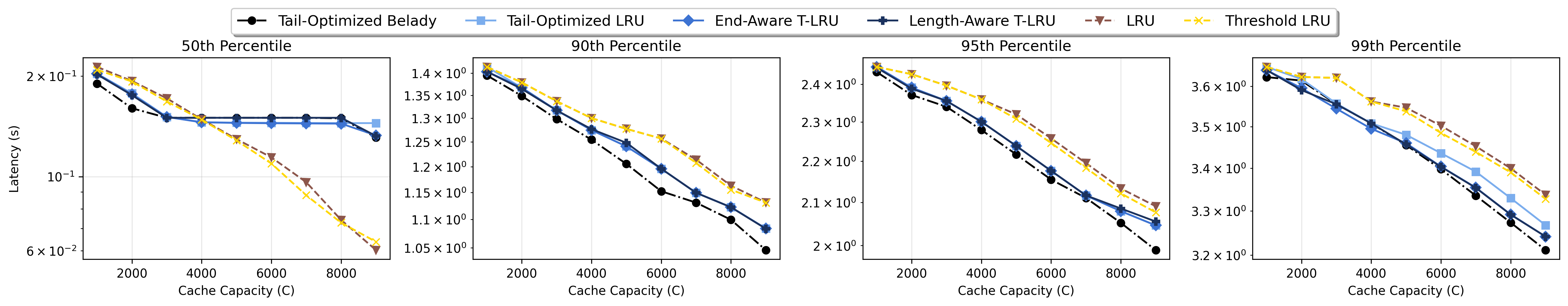}
    \end{subfigure}    
    \hfill
    \begin{subfigure}[t]{\textwidth} 
        \centering \includegraphics[width=\linewidth]{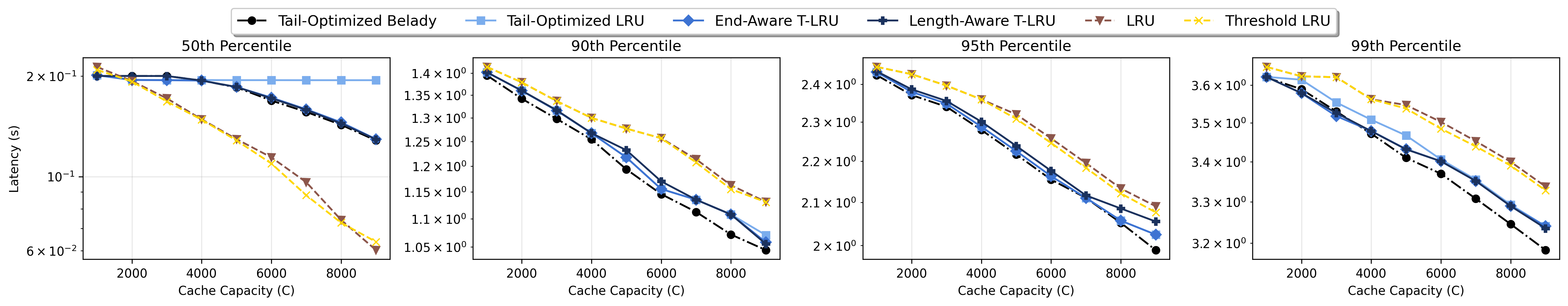}
    \end{subfigure}    
    \caption{Latency results for various settings (threshold latency $\xi_s$ = 100, 200, 300 ms) from top to bottom panels (ShareGPT)}
    \label{fig:result-latency-ShareGPT}
\end{figure}

\end{document}